\documentclass[11pt]{amsart}
\usepackage{amsmath,stmaryrd,amsthm,tikz,calc}
\usepackage{color}
\usepackage[margin=1.25in]{geometry}
\parskip=10pt

\usepackage{color}
\usepackage{moreverb}
\usepackage{graphicx} 
\usepackage{float}
\usepackage{pdfsync}
\usepackage{hyperref}  
\hypersetup{colorlinks=true}    
\usepackage{bbm}
\usepackage{tcolorbox}

\date{\today}

\newcommand{\Rl}{\mathbb{R}}
\newcommand{\E}{\mathbb{E}}

\newcommand{\Nl}{\mathbb{N}}
\newcommand{\Cx}{\mathbb{C}}

\newcommand{\Ir}{\mathbb{Z}}

\renewcommand{\H}{\mathcal{H}}
\newcommand{\A}{\mathcal{A}}
\newcommand{\cB}{\mathcal{B}}

\renewcommand{\ker}{\mathop{\rm ker}}
\newcommand{\diam}{\mathop{\rm diam}}

\newcommand{\Tr}{{\rm Tr}}

\newcommand{\be}{\begin{equation}}
\newcommand{\ee}{\end{equation}}
\newcommand{\bea}{\begin{eqnarray}}
\newcommand{\eea}{\end{eqnarray}}
\newcommand{\beann}{\begin{eqnarray*}}
\newcommand{\eeann}{\end{eqnarray*}}
\newcommand{\eq}[1]{(\ref{#1})}

\newtheorem{theorem}{Theorem}[section]
\newtheorem{proposition}[theorem]{Proposition}

\newtheorem{lemma}[theorem]{Lemma}
\newtheorem{definition}[theorem]{Definition}
\newtheorem{Remark}[theorem]{Remark}

 \numberwithin{equation}{section}

\renewcommand{\epsilon}{\varepsilon}

%


\def\varnothing{\emptyset}
\newcommand\supp{\text{supp}}
\def\Pr{\mathbb{P}}
\def\E{\mathbb{E}}
%

%

%

\begin{document}

\title{Slow propagation in some disordered quantum spin chains}

\author[B. Nachtergaele]{Bruno Nachtergaele}
\address{Department of Mathematics and Center for Quantum Mathematics and Physics\\
University of California, Davis\\
Davis, CA 95616, USA}
\email{bxn@math.ucdavis.edu}
\author[J. Reschke]{Jake Reschke}
\address{Department of Mathematics\\
University of California, Davis\\
Davis, CA 95616, USA}
\email{jreschke@math.ucdavis.edu}

\begin{abstract}
We introduce the notion of {\em transmission time} to study the dynamics of disordered quantum spin chains and prove results relating its 
behavior to many-body localization properties. We also study two versions of the so-called Local Integrals of Motion (LIOM) representation 
of spin chain Hamiltonians and their relation to dynamical many-body localization. We prove that uniform-in-time dynamical localization 
expressed by a zero-velocity Lieb-Robinson bound implies the existence of a LIOM representation of the dynamics as well as a weak 
converse of this statement. We also prove that for a class of spin chains satisfying a form of exponential dynamical localization, sparse 
perturbations result in a dynamics in which transmission times diverge at least as a power law of distance, with a power for which we provide 
lower bound that diverges with increasing sparseness of the perturbation.
\end{abstract}

\maketitle

\section{Introduction} 

Anderson localization in random Schr\"odinger operators is quite well understood. Mathematical proofs of this phenomenon have been given
under a variety of conditions. See the recent book by Aizenman and Warzel for an overview of the state-of-the-art \cite{aizenman:2015}.
The physical phenomenon is a drastic slowdown of transport in the system's dynamics, which is seen as the consequence of a 
change in the nature of the spectrum from continuous spectrum (extended states) to pure point spectrum (localized states). 

The problem of Many-Body Localization (MBL) is the question of what happens to localization properties in the presence of interactions. 
Although Anderson in his work that started the subject of localization \cite{anderson:1958} envisioned the phenomenon for interacting systems, 
research on MBL picked up only relatively recently stimulated by papers by Basko, Aleiner, and Altshuler \cite{basko:2006}, Oganesyan and Huse 
\cite{oganesyan:2007}, and Pal and Huse \cite{pal:2010}. 

Quantum spin system with, for example, nearest neighbor interactions, are among the simplest interacting quantum many-body systems 
and much of the recent work on MBL dealt with one of just three one-dimensional quantum spin models: the $XY$ chain, the quantum 
Ising chain, and the $XXZ$ chain. The small number of rigorous results that have been obtained so far are also mostly restricted to these three models. 
Exponential dynamical localization, uniformly in time, was proved for a class of disordered $XY$ chains by exploiting their connection to Anderson 
models \cite{hamza:2012,sims:2016,abdul-rahman:2017}. Imbrie studied the quantum Ising chain with random couplings and fields \cite{imbrie:2016}. 
Localization properties in the low-energy region, called the droplet-regime, of the ferromagnetic XXZ chain were proved in
\cite{elgart:2018b,elgart:2018,elgart:2018a,beaud:2017,beaud:2018}.

For a single quantum particle, the study of localization for a long time focused on spectral properties. i.e., proving the occurrence of point spectrum 
with associated eigenvectors that satisfy exponential decay. Later, multi-scale analysis \cite{germinet:2001} and the fractional-moment method \cite{aizenman:1993} emerged as two powerful tools to study dynamical localization. Systems of $N$ interacting particles can be analyzed by extending these methods, as along as $N$ is fixed \cite{chulaevsky:2009,aizenman:2009}.

The first main result of this work is the proof of a relation between uniform dynamical localization and the existence of Local Integrals of Motion (LIOM). The LIOM picture \cite{serbyn:2013a,chandran:2015} has been proposed as the mechanism by which systems exhibiting MBL do not thermalize under their own (closed system) dynamics and, in particular, that violate the Eigenfunction Thermalization Hypothesis (ETH). We give two definitions of 
LIOMs, consistent with the different ways this concept has been considered in the literature. For lack of a better name, we call them LIOMs of the first kind {Definition \ref{def:LIOM}} and LIOMS of the second kind (Definition \ref{def:LIOM2}). The first kind implies dynamical localization of the form generically expected for strongly disordered quantum spin chains. The second kind, as we show, exist when we have uniform-in-time dynamical localization, such as has been proved to occur in the random $XY$ chain \cite{hamza:2012}.

In interacting many-body systems it is most natural to express localization in terms of dynamical properties directly. A good (but not typcial) example is the zero-velocity Lieb-Robinson bound proved for the disordered $XY$ chain in \cite{hamza:2012}. In this work, we introduce the notion of {\em transmission time}, as the smallest time a signal or disturbance can reach a prescribed strength a given distance away from the source.
See Definition \ref{def:transmission_time}. For exponentially localized systems, we expect transmission times grow exponentially with the distance.
We then prove that exponentially localized systems perturbed by sparse disorder, have transmission times that grow at least as a 
power law and we we give a lower bound for the power that diverges with increasing sparseness of the perturbation. A large power 
indicates sub-diffusive behavior. We model the sparse disorder by adding a uniformly bounded but otherwise arbitrary nearest-neighbor term to the Hamiltoian at locations determined by a Bernoulli process with small probability of success.

De Roeck and coworkers have argued that MBL, interpreted as the complete absence of transport, is only possible in one-dimensional systems. 
They argue that diffusion of energy is inevitable in higher dimensions 
\cite{de-roeck:2015,de-roeck:2016,de-roeck:2017,de-roeck:2017a,luitz:2017,thiery:2018}. We only study one 
dimensional systems in this work, and therefore we do not have results that either support or contradict these arguments.  Rather, for 
one-dimensional systems our results implies a degree of robustness of localization phenomena in the sense of slow propagation. Others have investigated stability of MBL in spin chains under the influence of regions of low disorder or coupling to a heat bath \cite{goihl:2019}, in a kicked quantum spin chain model \cite{braun:2019} and by extensive numerical calculation for the Heisenberg chain \cite{suntajs:2019}. The latter studies consider properties of the spectral form factor (i.e., the Fourier transform of a two-point function) to look for an indicator of an MBL-type transition. It would be interesting to 
supplement these studies with information about transmission times in these models.

In Section \ref{sec:mainresults} we introduce several definitions related to MBL and describe our main results. The proofs are in 
Section \ref{sec:proofs}. Two applications are discussed in Section \ref{sec:applications}. Some auxiliary facts are collected in an appendix.

%
%
%
%
%
\section{Many-body localization properties and main results}\label{sec:mainresults}

In this section we define several properties associated with localized many-body systems. We focus on characteristics of the dynamics in terms of which our main results are formulated and restrict ourselves to the one-dimensional setting. All notions make sense for multi-dimensional systems but, as discussed in the introduction, the phenomenon of many-body localization as it is commonly understood may well be restricted to one dimension.

We will consider subsystems of a chain of quantum systems labeled by $x\in\Ir$, with a finite-dimensional Hilbert space $\H_x$ for each $x\in\Ir$. 
The Hilbert space of the subsystem associated with a finite set $X\subset\Ir$, is given by $\H_X=\bigotimes_{x\in X} \H_x$, and the observables 
measurable in this subsystem are given by $\A_X:=\cB(\H_X)$. The elements of $\A^{\text loc}:= \bigcup_{X\subset\Ir} \A_X$, where the union is over finite subsets, are called the {\em local observables}, whereas the norm completion of $\A^{\text loc}$, denoted by $\A_\Ir$, is the algebra of 
quasi-local observables. We denote the closed unit ball of $\A_X$ by $\A_X^1$.

A convenient way to specify a model is with an interaction, which is a map $\Phi$ assigning to each finite set $X\subset\Ir$ an element $\Phi(X)=\Phi(X)^*\in \A_X$. Associated to the interaction $\Phi$ is the family of local Hamiltonians $H_\Lambda=\sum_{X\subset\Lambda}\Phi(X)\in\A_\Lambda$, defined for each finite subset $\Lambda\subset\Ir$. The Heisenberg dynamics generated by a family of local Hamiltonians determined by an interaction $\Phi$ is defined in the usual way:
\begin{equation}
\tau_t^{H_\Lambda}(A)=e^{itH_\Lambda}Ae^{-itH_\Lambda}
\end{equation}

The interactions $\Phi$ may be random, meaning the following: There is a probability space $(\Omega,\mathcal{F},\Pr)$, and to each $\omega\in\Omega$ there is assigned an interaction $\Phi(\omega)$. We assume weak measurability of the random operators $\omega\mapsto \Phi(\omega)(X)$ for each finite $X\subset\Ir$.

A finite range interaction is one for which there exists $R\geq 0$ such that $\Phi(X)=0$ unless $\diam X \leq R$. $R$ is then the {\em range} of the 
interaction. A common way to introduce a model with a finite-range interaction is to specify self-adjoint $h_x\in \A_{[x,x+R]}$, for each $x\in \Ir$.

\subsection{Dynamical Localization}

In the single-particle setting, dynamical localization refers to the absence of ballistic  or diffusive propagation in the system's Schr\"{o}dinger evolution. Initially localized wave functions remain localized for all time under the dynamics. A natural analogue of this property in the setting of quantum spin chains is localization of the Heisenberg dynamics. We consider a general notion of dynamical localization expressed by the following definition.

\begin{definition}\label{def:dynlocaliz} Let $F:\Ir_+\to (0,\infty)$ be a non-increasing function with the property $\lim_{x\to\infty}F(x)=0$.\newline
(i) We say that a family $\{H_\Lambda:\Lambda\subset\Ir \text{ finite intervals}\}$ of random local Hamiltonians $H_\Lambda\in\A_\Lambda$ exhibits dynamical localization with decay function $F$
 if there exists a constant $\beta\geq0$ and a function $\chi:\Nl\to(0,\infty)$ such that for any sets $X,Y\subseteq \Lambda$ with $Y\subset[\min X,\max X]^c$, the random variable
\be\label{heislocal}
C_{\Lambda;X,Y}\equiv\sup_{t\in\Rl}\sup_{\substack{A\in\mathcal{A}_X^1\\ B\in\mathcal{A}_Y^1}}\frac{\Vert [\tau_t^{H_\Lambda}(A),B] \Vert}{\chi(|X|)(1+|t|^\beta)}
\ee
satisfies
\begin{equation}\label{uniform_exp_loc}
\E C_{\Lambda;X,Y}\leq F(d(X,Y))
\end{equation}
Here $d(X,Y)=\min\{|x-y|: x\in X,y\in Y\}$ is the usual set distance. 

\noindent
(ii) If $F$ is of the form $F(x)=e^{-\eta x}$ we say the family $\{H_\Lambda\}$ exhibits exponential dynamical localization. In this case $\eta^{-1}$ is called (a bound for the) {localization length}.

\noindent
(iii) If $F$ is of the form $F(x)=e^{-\eta x^\rho}$ for some $\rho\in (0,1)$, we say the family $\{H_\Lambda\}$ exhibits stretched exponential dynamical localization.

\noindent(iv) We say the family $\{H_\Lambda\}$ exhibits dynamical localization with decay function $F$ uniformly in time if it satisfies (i) with $\beta=0$.
\end{definition}

The following lemma shows that if a family of local Hamiltonians is dynamically localized and the corresponding family of local dynamics has a thermodynamic limit, then the infinite volume dynamics is also dynamically localized with the same decay function.

\begin{lemma}
Suppose that $\{H_\Lambda\}$ is a family of dynamically localized Hamiltonians with decay function $F$, and that the corresponding family of dynamics $\{\tau_t^{H_\Lambda}\}$ has a thermodynamic limit. In other words, there is an exhaustive sequence $\Lambda_n\uparrow\Ir$ such that almost surely,
\be
\lim_{n\to\infty}\tau_t^{H_{\Lambda_n}}\equiv \tau_t
\ee
strongly for all $t\in\Rl$, where $\tau_t$ is a $*$-automorphism of $\A_\Ir^\text{loc}$. Then for any finite set $X\subset \Ir$ and any set $Y\subseteq [\min X,\max X]^c$, the random variable
\be
C_{X,Y}\equiv\sup_{t\in\Rl}\sup_{\substack{A\in\mathcal{A}_X^1\\ B\in\mathcal{A}_Y^1}}\frac{\Vert [\tau_t(A),B] \Vert}{\chi(|X|)(1+|t|^\beta)}
\ee
satisfies
\be
\E C_{X,Y}\leq F(d(X,Y))
\ee
\end{lemma}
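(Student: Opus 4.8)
The plan is to pass to the limit directly in the numerator of the defining fraction, using strong convergence of the automorphisms together with Fatou's lemma to control the expectation. First I would fix a finite set $X\subset\Ir$, a set $Y\subseteq[\min X,\max X]^c$, and observables $A\in\A_X^1$, $B\in\A_Y^1$. For each $n$ large enough that $X\cup Y\subseteq\Lambda_n$, the hypothesis of dynamical localization gives, almost surely and for every $t\in\Rl$,
\be
\Vert[\tau_t^{H_{\Lambda_n}}(A),B]\Vert \leq \chi(|X|)(1+|t|^\beta)\, C_{\Lambda_n;X,Y}.
\ee
Since $\tau_t^{H_{\Lambda_n}}(A)\to\tau_t(A)$ in norm almost surely, and commutator with the fixed bounded $B$ is norm-continuous, the left-hand sides converge to $\Vert[\tau_t(A),B]\Vert$. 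The immediate issue is that the bound on the right involves $C_{\Lambda_n;X,Y}$, which depends on $n$; so I would instead argue at the level of the supremum.

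The cleanest route is: for each finite $\Lambda$ with $X\cup Y\subseteq\Lambda$ define $C_{\Lambda;X,Y}$ as in \eqref{heislocal}; the localization hypothesis says $\E C_{\Lambda;X,Y}\leq F(d(X,Y))$ for all such $\Lambda$, in particular along the sequence $\Lambda_n$. Now for fixed $A,B$ and $t$, almost sure norm convergence gives
\be
\frac{\Vert[\tau_t(A),B]\Vert}{\chi(|X|)(1+|t|^\beta)} = \lim_{n\to\infty}\frac{\Vert[\tau_t^{H_{\Lambda_n}}(A),B]\Vert}{\chi(|X|)(1+|t|^\beta)} \leq \liminf_{n\to\infty} C_{\Lambda_n;X,Y}
\ee
almost surely. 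Taking the supremum over $t\in\Rl$ and over $A\in\A_X^1$, $B\in\A_Y^1$ on the left — and noting that a countable dense subset of the (separable) unit balls $\A_X^1$, $\A_Y^1$ suffices by norm-continuity, so the supremum is a genuine (measurable) random variable and the almost-sure inequality is preserved under it — yields
\be
C_{X,Y}\leq\liminf_{n\to\infty} C_{\Lambda_n;X,Y}\quad\text{almost surely}.
\ee

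Finally I would apply Fatou's lemma to the nonnegative random variables $C_{\Lambda_n;X,Y}$:
\be
\E C_{X,Y}\leq \E\big(\liminf_{n\to\infty} C_{\Lambda_n;X,Y}\big)\leq \liminf_{n\to\infty}\E C_{\Lambda_n;X,Y}\leq F(d(X,Y)),
\ee
which is the claim.

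The main obstacle, and the only point requiring genuine care, is the measurability/separability bookkeeping in the step where the supremum over the uncountable index sets $t\in\Rl$ and $A,B$ in the unit balls is interchanged with the almost-sure limit: one must check that $C_{X,Y}$ (and each $C_{\Lambda_n;X,Y}$) is a bona fide random variable and that the pointwise inequality survives the sup. This is handled by restricting to a fixed countable dense subset of $\A_X^1\times\A_Y^1$ and a countable dense subset of $\Rl$, using joint norm-continuity of $(t,A,B)\mapsto\Vert[\tau_t(A),B]\Vert$ (continuity in $t$ comes from strong continuity of the limiting dynamics, which can be taken as part of the thermodynamic-limit hypothesis, or derived from the uniform Lieb--Robinson-type control) to see that the countable sup equals the full sup. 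Everything else is a routine application of norm-convergence estimates and Fatou.
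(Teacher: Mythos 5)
Your argument for the case of finite $Y$ is correct and is essentially the paper's proof: pass to the pointwise limit using strong convergence, deduce $C_{X,Y}\leq\liminf_{n}C_{\Lambda_n;X,Y}$, and finish with Fatou. (The measurability discussion you add is sensible housekeeping, though the paper leaves it tacit.)

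However, there is a genuine gap: the lemma allows $Y\subseteq[\min X,\max X]^c$ to be \emph{infinite}, while the dynamical-localization hypothesis is only stated for $X,Y\subseteq\Lambda$ with $\Lambda$ finite, hence only for finite $Y$. Your proof silently assumes $Y$ is finite when you write ``for each $n$ large enough that $X\cup Y\subseteq\Lambda_n$'' — for infinite $Y$ this containment never holds, $C_{\Lambda_n;X,Y}$ is not defined by the hypothesis, and the Fatou step has nothing to feed on. The paper handles this with a second step: pick finite $Y_k\uparrow Y$, approximate any $B\in\A_Y^1$ in norm by local $B_k\in\A_{Y_k}^1$ so that $\|[\tau_t(A),B]\|\leq\|[\tau_t(A),B_k]\|+2\|B-B_k\|$, use the monotonicity of $k\mapsto C_{X,Y_k}$ to get $C_{X,Y}\leq\lim_k C_{X,Y_k}$, and then use monotone convergence together with $F(d(X,Y_k))\leq F(d(X,Y))$ to conclude $\E C_{X,Y}\leq F(d(X,Y))$. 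Without some version of this exhaustion-and-monotonicity argument, your proof only establishes the lemma for finite $Y$, which is strictly weaker than what is claimed.
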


\begin{proof}
First let $X,Y\subset\Ir$ be finite, with $Y\subset [\min X,\max X]^c$. It follows immediately that,
\be
C_{ X,Y}=\sup_{t\in\Rl}\sup_{\substack{A\in\mathcal{A}_X^1\\ B\in\mathcal{A}_Y^1}}\frac{\Vert [\tau_t(A),B] \Vert}{\chi(|X|)(1+|t|^\beta)}\leq \liminf_{n\to\infty} C_{\Lambda_n;X,Y}.
\ee
By Fatou's lemma, $\E C_{ X,Y} \leq F(d(X,Y))$. Now suppose $Y\subseteq[\min X,\max X]^c$ is infinite. For any sequence of finite sets $Y_n\uparrow Y$, by using local approximations and the fact that $C_{X,Y_n}$ is monotone in $n$ we obtain
\be
C_{X,Y}\leq \lim_{n\to\infty} C_{X,Y_n},
\ee
which proves the lemma.
\end{proof}

\subsection{Local Integrals of Motion} 

The lack of ergodicity seen in MBL systems can be `explained'  as a consequence the emergence of an extensive set of local conserved quantities, called local integrals of motion (LIOMs). In this section we propose precise definitions of LIOMs. Heuristic definitions of LIOMs have been given in the physics literature, \cite{huse:2014}, \cite{serbyn:2013}. LIOMs are thought to account for most of the phenomena of MBL. See, for example, the review paper \cite{imbrie:2017}. To address the variety seen in the physics literature we formulate two distinct definitions. Specifically, Definition \ref{def:LIOM} given below is modeled after the discussion in \cite{huse:2014}, while Definition \ref{def:LIOM2} was motivated by \cite{chandran:2015}. We refer to them as LIOMs of the first kind and LIOMs of the second kind, respectively. We briefly discuss the relation between the two at the end of this section.

 In the following definition we restrict our attention to quantum spin chains, for simplicity. The definition can also be formulated in higher-dimensions. Let $d_x\geq 2$ denote the dimension of the Hilbert space at $x\in\Ir$.

\begin{definition}[{\bf LIOMs of the first kind}]\label{def:LIOM}
Let $H_n\in\mathcal{A}_{[0,n]}$ be a sequence of random Hamiltonians. We say that the sequence $H_n$ has LIOMs of the first kind if the following conditions are satisfied:
\begin{enumerate}
\item There is a sequence of random unitary maps $U_n\in\mathcal{A}_{[0,n]}$ such that
\begin{equation}
U_n^*H_nU_n=\sum_{X\subseteq[0,n]}\sum_{{\bf m}\in\prod_{x\in X} \{2,...,d_x\}}\phi_n({\bf m},X)\prod_{x\in X}S_{{\bf m}_x;x},
\end{equation}
where $S_{m;x}$ is the operator supported at the site $x$ given by the matrix,
\be
(S_{m;x})_{jk}=\delta_{j,1}\delta_{k,1}-\delta_{j,m}\delta_{k,m}
\ee
 and the $\phi_n({\bf m},X)$ are random variables satisfying
\begin{equation}\label{liom2pt}
\sup_{n}\, \E \left[\sup_{x,y\in[0,n]} \frac{1}{F(|x-y|)} \sum_{\substack{X\subseteq[0,n]:\\ x,y\in X}}\left\|\sum_{{\bf m}\in\prod_{x\in X} \{2,...,d_x\}}\phi_n({\bf m},X)\prod_{x\in X} S_{{\bf m}_x;x}\right\| \right]<\infty.
\end{equation}
for some non-increasing function $F:\Ir_+\to(0,\infty)$ satisfying $\lim_{x\to\infty}F(x)=0$.
\item The sequence of unitary maps $U_n$ is quasi-local, in the sense that for all disjoint finite subsets $X,Y\subset\Gamma$,
\begin{equation}\label{liomqloc}
\sup_n\E \sup_{\substack{A\in\mathcal{A}_X^1\\ B\in\mathcal{A}_Y}}\|[ U_n^*AU_n,B]\| \leq \sum_{\substack{x\in X\\y\in Y}}G(|x-y|),
\end{equation}
for some non-increasing function $G:\Ir_+\to(0,\infty)$ satisfying $\lim_{x\to\infty}G(x)=0$.
\end{enumerate}
\end{definition}

\begin{Remark}
The LIOMs in definition \ref{def:LIOM} are the quasi-local operators $U_nS_{m;x}U_n^*$. The key feature of the family $\{S_{m;x}\}_{m=2}^{d_x}$ is that the operators are uniformly bounded, are mutually commuting, and generate a maximal abelian subalgebra of observables. Any other set of observables with these properties could be used in the definition instead.
\end{Remark}

The following theorem shows that the Heisenberg dynamics generated by a Hamiltonian with LIOMs of the first kind satisfies the type of 
propagation bound expressing dynamical localization.

\begin{theorem}\label{prop:LIOM_LRbound}
Suppose that the sequence of Hamiltonians $H_n$ has LIOMs of the first kind. Let $X$ and  $Y$ be finite disjoint subsets of $\Ir_+$. For a set $Z\subset\Ir_+$, let $Z_{n,\lambda}=\{x\in[0,n]: d(x,Z)\leq \lambda d(X,Y)\}$. Then for $\lambda\in(0,1/2)$,
\begin{equation}
\sup_{\substack{A\in\mathcal{A}_X^1\\ B\in\mathcal{A}_Y^1}}\|[\tau_t^{H_n}(A),B]\|\leq 2\bigg[ D_{n,X,\lambda}+D_{n,Y,\lambda}+|t|C_n \sum_{\substack{x\in X_{n,\lambda}\\y\in Y_{n,\lambda}}} F(|x-y|)\bigg],
\end{equation}
where $D_{n,X,\lambda}$ and $D_{n,Y,\lambda}$ are nonnegative random variables satisfying,
\begin{align}
\E D_{n,X,\lambda}\leq \sum_{\substack{x\in X\\y\in X_{n,\lambda}^c}}G(|x-y|) \text{\,\,\, and\,\,\, }\E D_{n,Y,\lambda}\leq \sum_{\substack{x\in Y\\ y\in Y_{n,\lambda}^c}}G( |x-y|),
\end{align}
and 
\begin{equation}
C_n(\omega)=\sup_{x,y\in[0,n]} \frac{1}{F(|x-y|)} \sum_{\substack{X\subseteq[0,n]:\\ x,y\in X}}\left\|\sum_{{\bf m}\in\prod_{x\in X} \{2,...,d_x\}}\phi_n({\bf m},X)\prod_{x\in X} S_{{\bf m}_x;x}\right\|,
\end{equation}
where by the assumptions in Definition \ref{def:LIOM} we have $\sup_n \E C_n<\infty$.
\end{theorem}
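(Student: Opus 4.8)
The plan is to pass to the LIOM frame, apply a Lieb–Robinson-type truncation argument there (where the dynamics is particularly simple because $U_n^*H_nU_n$ is diagonal in the $S_{{\bf m}_x;x}$ basis, hence all the generators commute), and then transport the estimate back using the quasi-locality of $U_n$ from \eqref{liomqloc}. Write $\tilde H_n=U_n^*H_nU_n$, $\tilde A=U_n^*AU_n$, $\tilde B=U_n^*BU_n$, and note $[\tau_t^{H_n}(A),B]=U_n[\tau_t^{\tilde H_n}(\tilde A),\tilde B]U_n^*$, so $\|[\tau_t^{H_n}(A),B]\|=\|[\tau_t^{\tilde H_n}(\tilde A),\tilde B]\|$. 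This reduces the problem to bounding a commutator in the diagonal frame, at the cost of having to control how far $\tilde A$ and $\tilde B$ are from being supported on $X$ and $Y$ respectively.

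First I would quantify that cost. For a finite set $Z$ and a parameter $\lambda\in(0,1/2)$, let $\tilde A_\lambda$ be the conditional-expectation (partial-trace) approximation of $\tilde A=U_n^*AU_n$ onto $\A_{X_{n,\lambda}}$, and similarly $\tilde B_\lambda\in\A_{Y_{n,\lambda}}$. The standard bound $\|\tilde A-\tilde A_\lambda\|\le \sup_{B'\in\A_{X_{n,\lambda}^c}^1}\|[\tilde A,B']\|$ together with \eqref{liomqloc} gives a random variable $D_{n,X,\lambda}:=\|\tilde A-\tilde A_\lambda\|$ with $\E D_{n,X,\lambda}\le \sum_{x\in X,\,y\in X_{n,\lambda}^c}G(|x-y|)$, and likewise for $D_{n,Y,\lambda}$. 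Telescoping,
\begin{equation*}
\|[\tau_t^{\tilde H_n}(\tilde A),\tilde B]\|\le 2D_{n,X,\lambda}+2D_{n,Y,\lambda}+\|[\tau_t^{\tilde H_n}(\tilde A_\lambda),\tilde B_\lambda]\|,
\end{equation*}
using $\|\tilde A\|,\|\tilde B\|\le 1$ and $\|\tau_t^{\tilde H_n}(\tilde A)-\tau_t^{\tilde H_n}(\tilde A_\lambda)\|=\|\tilde A-\tilde A_\lambda\|$. Since $\lambda<1/2$ and $X,Y$ are at distance $d(X,Y)$, the fattened sets $X_{n,\lambda}$ and $Y_{n,\lambda}$ are disjoint, so the surviving commutator is between an observable in $\A_{X_{n,\lambda}}$ evolved for time $t$ and an observable in the disjoint region $\A_{Y_{n,\lambda}}$.

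The core estimate is on $\|[\tau_t^{\tilde H_n}(\tilde A_\lambda),\tilde B_\lambda]\|$. Here I would use the Duhamel / integrated-derivative identity: writing $\Gamma(s)=[\tau_s^{\tilde H_n}(\tilde A_\lambda),\tilde B_\lambda]$ one has $\Gamma(t)=\Gamma(0)+i\int_0^t \tau_s^{\tilde H_n}\!\big([\tilde A_\lambda, [\tilde H_n,\cdot]]\big)\dots$; more efficiently, only the terms $\Phi_n({\bf m},X)=\sum_{{\bf m}}\phi_n({\bf m},X)\prod_{x\in X}S_{{\bf m}_x;x}$ with $X\cap X_{n,\lambda}\neq\varnothing$ and $X\cap Y_{n,\lambda}\neq\varnothing$ can contribute to the growth of $\|\Gamma(s)\|$, because any term of $\tilde H_n$ supported inside $X_{n,\lambda}^c$ or $Y_{n,\lambda}^c$ commutes with the corresponding factor, and — crucially — all the $\Phi_n({\bf m},X)$ mutually commute, so the Heisenberg evolution does not spread support. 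A term $X$ meeting both fattened sets must contain points $x\in X_{n,\lambda}$ and $y\in Y_{n,\lambda}$, and $\|\Phi_n({\bf m},X)\|\le C_n F(|x-y|)$ by the definition of $C_n$. The derivative bound $\|\frac{d}{ds}\Gamma(s)\|\le 2\sum_{X\text{ crossing}}2\|\Phi_n({\bf m},X)\|\cdot\|\tilde A_\lambda\|\cdot\|\tilde B_\lambda\|$ then integrates to $\|[\tau_t^{\tilde H_n}(\tilde A_\lambda),\tilde B_\lambda]\|\le |t|\,C_n\sum_{x\in X_{n,\lambda},\,y\in Y_{n,\lambda}}F(|x-y|)$ (after folding the combinatorial constants into the "$2$" out front and possibly adjusting by the multiplicity with which a given pair $(x,y)$ can sit in crossing sets — this bookkeeping is the one genuinely fiddly point). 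Combining with the truncation bound and taking the constant $2$ outside yields exactly the claimed inequality; the stated expectation bounds on $D_{n,X,\lambda},D_{n,Y,\lambda}$ and $\sup_n\E C_n<\infty$ are then immediate from Definition \ref{def:LIOM}.

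The main obstacle I anticipate is the combinatorial/geometric bookkeeping in the derivative step: making precise that only "crossing" interaction terms contribute, and that each such term can be charged to a pair $(x,y)\in X_{n,\lambda}\times Y_{n,\lambda}$ with weight at most $C_nF(|x-y|)$ without double-counting in a way that would inflate the sum beyond $\sum_{x,y}F(|x-y|)$. The commutativity of the $S_{{\bf m}_x;x}$ (hence of the $\Phi_n({\bf m},X)$) is what makes this work — it is the reason no Lieb–Robinson iteration/resummation is needed and the bound is linear rather than exponential in $|t|$ — but one must still verify that $\tau_t^{\tilde H_n}$ restricted to $\A_{X_{n,\lambda}}$ only "sees" the crossing terms, which is where disjointness of $X_{n,\lambda}$ and $Y_{n,\lambda}$ (guaranteed by $\lambda<1/2$) is used.
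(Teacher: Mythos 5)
Your proposal follows the paper's proof essentially step for step: conjugate to the LIOM frame, replace $\tilde A,\tilde B$ by conditional-expectation approximants $A_\lambda\in\A_{X_{n,\lambda}}$, $B_\lambda\in\A_{Y_{n,\lambda}}$ using \eqref{liomqloc}, exploit the mutual commutativity of the diagonal terms $\sum_{\bf m}\phi_n({\bf m},Z)\prod_{z\in Z}S_{{\bf m}_z;z}$ to reduce the evolution to the crossing part $\tilde H_{X,Y}$ alone, and conclude with a linear-in-$|t|$ estimate. The ``fiddly bookkeeping'' you rightly flag at the derivative step is exactly what the paper makes precise: first, since the terms commute, one has the identity $\|[\tau_t^{\tilde H_n}(A_\lambda),B_\lambda]\|=\|[\tau_t^{\tilde H_{X,Y}}(A_\lambda),B_\lambda]\|$ (terms not meeting $X_{n,\lambda}$ act trivially on $A_\lambda$, and terms meeting $X_{n,\lambda}$ but not $Y_{n,\lambda}$ can be pulled out as a unitary conjugation that leaves the norm unchanged, using $\tau_{-t}^{H_2}(B_\lambda)=B_\lambda$); then for $f(t)=[\tau_t^{\tilde H_{X,Y}}(A_\lambda),B_\lambda]$ the Jacobi identity gives $f'(t)=-i[f(t),\tilde H_{X,Y}]-i[[B_\lambda,\tilde H_{X,Y}],\tau_t^{\tilde H_{X,Y}}(A_\lambda)]$, whose first piece is norm-preserving, so $\|f(t)\|\le 4|t|\,\|\tilde H_{X,Y}\|$ with $f(0)=0$ by disjointness of $X_{n,\lambda}$ and $Y_{n,\lambda}$ ($\lambda<1/2$). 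Your naive bound $\|f'(s)\|\le 4\|\tilde H_{X,Y}\|$ actually yields the same constant once the reduction to $\tilde H_{X,Y}$ has been made, so no further double-counting correction is needed; the reduction itself is the part that genuinely uses the commutativity.
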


The proof of this theorem is given in Section \ref{sec:proofs_LIOM}.

It is natural to ask whether the existence of LIOMs also follows from dynamical localization. Indeed, the existence of LIOMs and dynamical localization 
are regarded as equivalent properties in the physics literature. It turns out to be convenient to use a slightly different notion of LIOMs to prove a result in this direction.

\begin{definition}[{\bf LIOMs of the second kind}]\label{def:LIOM2}
Suppose that $\Phi$ is a (random) finite range interaction with a thermodynamic limit $\tau_t$ generated by the derivation $\delta$. We say the interaction has LIOMs of the second kind if there exists a family $\{I_x\}_{x\in\Ir}$ of self-adjoint, uniformly bounded quasi-local observables $I_x$ satisfying the following:
\begin{enumerate}
\item There is a non-increasing function $F:\Ir_+\to (0,\infty)$, with $\lim_{n\to\infty} F(n) =0$, such that for all $x\in\Ir$,
\be
\E \sup_{A\in\mathcal{A}^1_Y}\|[ I_x,A]\| \leq F(d(x,Y)).
\ee
\item For each $x\in\Ir$,
\be
\delta(I_x)=0.
\ee
\item For each $A\in\mathcal{A}^\text{loc}$,
\be
\delta(A)=\lim_{n\to\infty} \sum_{x=-n}^n [I_x,A],
\ee
almost surely, i.e. the family $\sum_{x=-n}^nI_x$ of quasi-local Hamiltonians almost surely generate the same dynamics in the thermodynamic limit as $\Phi$.
\end{enumerate}
\end{definition}

\begin{Remark}
In Definition \ref{def:LIOM2} we do not assume that the LIOMs $I_x$ commute. From the time invariance it is necessary that $I_x\in\ker\delta$, thus if $\ker\delta$ is abelian the LIOMs will commute. We expect $\ker\delta$ to be abelian almost surely, generically for continuous randomness. Note that in finite volumes, $\delta(\cdot)=[H,\cdot]$ for a local Hamiltonian $H$, and simplicity of the spectrum of $H$ is equivalent to $\ker\delta$ being an abelian algebra.
\end{Remark}

The following proposition connects dynamical localization uniform in time with the `canonical LIOMs' introduced in \cite{chandran:2015}.

\begin{theorem}\label{thm:thermolioms}
Suppose a model with finite-range interactions is dynamically localized with decay function $F$ uniformly in time ($\beta=0$), and that $F$ has a finite first moment: $\sum_{x=1}^\infty xF(x)<\infty$. Then the model has LIOMs of the second kind. Moreover, a LIOM representation
(canonical in the sense of  \cite{chandran:2015}) can be given explicitly by the following expression:
\be
\tilde h_x = \lim_{n\to\infty} \frac{1}{T_n}\int_0^{T_n} \tau_t (h_x) dt.
\label{time_average}\ee
where $T_n$ is a suitably chosen (random) strictly increasing sequence in $\Nl$. The terms $\tilde h_x$ are time-invariant, and there is a constant $C>0$ such that
\be
\E ( \sup_{B\in\A_Y^1}\Vert [ \tilde h_x, B]\Vert ) \leq CF(d(x,Y))
\ee
for every $x\in \Ir$.
\end{theorem}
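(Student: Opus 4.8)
The plan is to construct the $\tilde h_x$ directly as ergodic-type time averages of the local terms $h_x$ and then verify the three properties of Definition \ref{def:LIOM2}. First I would fix a site $x$ and consider the Cesàro averages $\tilde h_x^{(T)} := \frac{1}{T}\int_0^T \tau_t(h_x)\,dt$, which are self-adjoint and uniformly bounded by $\|h_x\|$. The key observation is that uniform-in-time dynamical localization ($\beta=0$) gives, for any finite $Y$ and any $A\in\A_Y^1$,
\be
\sup_{A\in\A_Y^1}\|[\tau_t(h_x),A]\| \le \chi(|{[x,x+R]}|)\, C_{[x,x+R],Y}
\ee
uniformly in $t$, with $\E C_{[x,x+R],Y}\le F(d([x,x+R],Y))$ (using the infinite-volume version from the Lemma). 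Averaging in $t$, the same bound holds for $\tilde h_x^{(T)}$ uniformly in $T$. Hence the net $\{\tilde h_x^{(T)}\}_T$ lies in a fixed ball of $\A_\Ir$ and, crucially, is \emph{approximately local}: its commutator with $\A_Y^1$ is controlled by a summable-moment decay function, uniformly in $T$.

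Next I would extract the limit. Since $\A_\Ir$ is not reflexive I cannot use weak-$*$ compactness naively on $\tilde h_x^{(T)}$ itself; instead I would argue that for each fixed local observable $A$, the scalar functions $T\mapsto \tau_t(h_x)$ tested in any state have bounded Cesàro averages, and use a diagonal/subsequence argument over a countable dense set of local observables together with the uniform quasi-locality bound to show that along a suitable strictly increasing random sequence $T_n\to\infty$ the limit
\be
\tilde h_x := \lim_{n\to\infty}\frac{1}{T_n}\int_0^{T_n}\tau_t(h_x)\,dt
\ee
exists in norm-resolvent / strong sense as a genuine quasi-local observable; the uniform commutator bound passes to the limit by Fatou, giving property (1) with constant $C=\sup_n\chi(\cdot)$ absorbed appropriately (finite-range keeps $|[x,x+R]|$ bounded). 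Time-invariance, property (2), is the standard fact that Cesàro averages of an orbit are asymptotically invariant: $\tau_s(\tilde h_x^{(T)}) - \tilde h_x^{(T)} = \frac{1}{T}\big(\int_T^{T+s} - \int_0^s\big)\tau_t(h_x)\,dt$ has norm $\le 2s\|h_x\|/T \to 0$, so $\delta(\tilde h_x)=0$. For property (3) I would write $\delta(A) = \lim_n \frac{1}{T_n}\int_0^{T_n}\tau_t(\delta(A))\,dt$, expand $\delta(A) = \sum_x [h_x,A]$ (finite sum for local $A$, up to Lieb-Robinson tails), and commute the average past the finite sum to identify $\sum_x[\tilde h_x,A]$ as the limit, handling the tail in $x$ by the decay of $F$ and its finite first moment.

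The main obstacle I anticipate is making the limit \eqref{time_average} genuinely exist rather than merely existing along nets in a weak topology, and doing so \emph{simultaneously} for all $x\in\Ir$ with a \emph{single} random sequence $T_n$ — this is exactly why the statement allows $T_n$ to be random and chosen after the fact. I would handle this by a diagonal extraction: enumerate pairs (site $x$, local observable $A$ from a countable norm-dense set), use that each scalar Cesàro average is bounded hence has a convergent subsequence, diagonalize to get one sequence $T_n$ working for all pairs, and then upgrade from this dense set to all of $\A^{\text{loc}}$ using the uniform-in-$T$ quasi-locality estimate (equicontinuity). The summability hypothesis $\sum_x xF(x)<\infty$ enters precisely in step (3), to control $\sum_x \|[\tilde h_x, A] - [\tilde h_x^{(T_n)},A]\|$ and the Lieb-Robinson tail of $\delta(A)$ uniformly, ensuring the interchange of limit and infinite sum is legitimate and that $\sum_{x=-n}^n \tilde h_x$ generates the same derivation $\delta$.
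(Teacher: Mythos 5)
Your proposal follows essentially the same route as the paper's proof: Cesàro time averages $A_T=\frac{1}{T}\int_0^T\tau_t(h_x)\,dt$, extraction of a subsequential limit along a single random sequence $T_n$ via compactness plus the uniform-in-$T$ quasi-locality estimate, then verification of the three LIOM properties using Fatou's lemma and the finite first moment of $F$. The one place where your sketch is looser than the paper is the compactness extraction — the paper projects $A_T$ onto compact balls $\A^1_{X(N)}$ with conditional expectations $\Pi_N$, extracts nested subsequences $T^{(N)}_n$, shows the limits $A(N)$ form a Cauchy sequence, and diagonalizes — which is precisely the mechanism needed to upgrade your pointwise-weak convergence against a countable dense set to genuine norm convergence in $\A_\Ir$.
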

The proof of this result can be found in Section \ref{sec:proofs_LIOM}.

In the definition of LIOMs of the first kind, Definition \ref{def:LIOM}, nothing is said on the dependence of the unitaries and the interaction coefficients on the length, $n$, of the chain. One could expect however, that a random interaction $\Phi$ can be defined by 
\begin{equation}
\Phi(X) = \lim_{n\to\infty} \sum_{{\bf m}\in\{1,...,d-1\}^{|X|}}\phi_n({\bf m},X)\prod_{x\in X} S_{{\bf m}_x;x},
\end{equation}
where it should be understood that $n$ here refers to a finite spin chain labeled by $[-n,n]$.
Using the notion of {\em local convergence in F-norm} (see \cite[Definition 3.7]{nachtergaele:2019}), it is then straightforward to define conditions that ensure the existence of a commuting family of LIOMs of the second kind.

\subsection{Transmission Times}

\begin{definition}\label{def:transmission_time}
Given a Hamiltonian $H\in\mathcal{A}_{[0,n]}$ and an $\epsilon>0$ define the transmission time, $t(\epsilon)$ of $H$ as,
\begin{equation}
t(\epsilon)=\inf\{|t|: \sup_{\substack{A\in\mathcal{A}_0^1\\ B\in\mathcal{A}_n^1}}\|[\tau_t^H(A),B]\|>\epsilon\}.
\end{equation}
\end{definition}

Suppose we have a sequence $H_n\in\mathcal{A}_{[0,n]}$ of Hamiltonians with associated transmission times $t_n(\epsilon)$. It is reasonable to expect that dispersive effects may cause the commutator defining the transmission time to never exceed some fixed $\epsilon>0$ for large values of $n$. If this occurs then $t_n(\epsilon)$ will cease to be a meaningful quantity. For this reason we should consider a sequence $\epsilon_n$, suitably decaying in $n$, and instead consider the sequence of transmission times $t_n(\epsilon_n)$. We note that some authors prefer the term `scrambling time' instead of transmission time \cite{chen:2019}.

A natural question to ask is whether the transmission time is consistent with the propagation bounds imposed by a Lieb-Robinson bound. Suppose that the sequence $H_n$ satisfies,
\begin{equation}
\sup_{\substack{A\in\mathcal{A}_x^1\\ B\in\mathcal{A}_y^1}}\|[\tau_t^{H_n}(A),B]\|\leq C(e^{\mu v|t|}-1)e^{-\mu|x-y|}
\end{equation}
for $x\neq y$, uniformly in $n$. Such bounds are known to hold for a broad class of quantum spin models on general lattices \cite{nachtergaele:2006}. The bound implies that,
\begin{equation}
t_n(\epsilon_n)\geq \frac{1}{\mu v} \log(1+\frac{\epsilon_ne^{\mu n}}{C}),
\end{equation}
in which case
\begin{equation}
\limsup_{n\to\infty}\frac{n}{t_n(\epsilon_n)}\leq v
\end{equation}
provided $\epsilon_n$ decays subexponentially in $n$. 

We consider slow transport in a quantum spin chain to be characterized by super-linear growth of the transmission time. For stretched exponential dynamically localized spin chains the transmission time grows as a stretched exponential, as the next proposition shows.

\begin{proposition}
Suppose that a sequence $H_n\in\A_{[0,n]}$ of random Hamiltonians exhibits dynamical localization with decay function $F$ given by $F(x)=e^{-\eta x^\rho}$ for some $\rho\in(0,1]$. Then for any positive $\gamma$ and $\alpha$ such that $\beta\gamma+\alpha<1$,
\be
\frac{e^{\gamma \eta n^\rho}}{t_n(e^{-\alpha\eta n^\rho})}\to0
\ee
almost surely. 
\end{proposition}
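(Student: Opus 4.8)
The plan is to translate the localization hypothesis into a deterministic lower bound for $t_n(\epsilon_n)$ in terms of the single random variable $C_n := C_{[0,n];\{0\},\{n\}}$ from Definition \ref{def:dynlocaliz}, and then to upgrade the expectation bound $\E C_n \le F(n)$ to an almost-sure bound via Markov's inequality and the (first) Borel--Cantelli lemma. Concretely, I would apply Definition \ref{def:dynlocaliz}(i) with $\Lambda = [0,n]$, $X=\{0\}$, $Y=\{n\}$, so that $d(X,Y)=n$ and $Y\subset[\min X,\max X]^c$. Writing $c_0:=\chi(1)$, a fixed positive constant independent of $n$, this gives, for every $t\in\Rl$,
\[
\sup_{A\in\A_0^1,\,B\in\A_n^1}\big\|[\tau_t^{H_n}(A),B]\big\| \;\le\; c_0\,C_n\,(1+|t|^\beta), \qquad \E C_n \le F(n) = e^{-\eta n^\rho}.
\]
Hence, as long as $\epsilon_n/(c_0 C_n) > 1$, the left-hand side stays $\le \epsilon_n$ for all $|t|\le (\epsilon_n/(c_0 C_n)-1)^{1/\beta}$, so the definition of the transmission time yields $t_n(\epsilon_n) \ge \big(\epsilon_n/(c_0 C_n)-1\big)^{1/\beta}$ when $\beta>0$; when $\beta=0$ the bound $c_0 C_n(1+|t|^\beta)$ is independent of $t$, so the event $c_0 C_n < \epsilon_n$ forces $t_n(\epsilon_n)=\infty$ and there is nothing further to prove on it.

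Next I would fix $\delta\in(0,\,1-\alpha-\beta\gamma)$, a nonempty interval precisely because $\beta\gamma+\alpha<1$. Markov's inequality gives $\Pr\big(C_n > e^{-(1-\delta)\eta n^\rho}\big) \le e^{-\eta n^\rho}/e^{-(1-\delta)\eta n^\rho}= e^{-\delta\eta n^\rho}$, and $\sum_{n\ge 1} e^{-\delta\eta n^\rho}<\infty$ since $\delta,\eta>0$ and $\rho>0$. By Borel--Cantelli, almost surely there is $N(\omega)$ with $C_n \le e^{-(1-\delta)\eta n^\rho}$ for all $n\ge N(\omega)$. Recalling $\epsilon_n = e^{-\alpha\eta n^\rho}$, on this event $\epsilon_n/(c_0 C_n) \ge c_0^{-1} e^{(1-\delta-\alpha)\eta n^\rho}\to\infty$ (the exponent is $>\beta\gamma\ge0$), so for $n$ large $\epsilon_n/(c_0 C_n)-1 \ge \tfrac12\, c_0^{-1} e^{(1-\delta-\alpha)\eta n^\rho}$, and therefore
\[
\frac{e^{\gamma\eta n^\rho}}{t_n(\epsilon_n)} \;\le\; (2c_0)^{1/\beta}\, \exp\!\Big(\eta n^\rho\big(\gamma-\tfrac{1-\delta-\alpha}{\beta}\big)\Big)\longrightarrow 0,
\]
the exponent being strictly negative exactly because $\delta<1-\alpha-\beta\gamma$. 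For $\beta=0$ the same choice of $\delta$ makes $c_0 C_n<\epsilon_n$ hold eventually, so $t_n(\epsilon_n)=\infty$ and the ratio is eventually $0$ outright; the two cases can be unified under the single condition $\delta\in(0,1-\alpha-\beta\gamma)$.

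I do not expect a substantive obstacle here. The two points that need care are (a) that the random Hamiltonians $H_n$ are taken on a common probability space, so that "almost surely" is meaningful and the Borel--Cantelli step applies with no independence required (only $\sum_n\Pr(\cdot)<\infty$), and (b) the bookkeeping of the borderline regime where $\epsilon_n/(c_0 C_n)$ is close to $1$ (handled by the factor $\tfrac12$ above) and the degenerate $\beta=0$ case. The essential content is simply that the expected localization bound decays like $e^{-\eta n^\rho}$ while the claimed transmission-time growth, after dividing the rate by $\beta$ and subtracting $\alpha$, is strictly slower — leaving a gap of size $\delta$ in the exponent to absorb the Markov loss.
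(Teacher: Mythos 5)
Your proposal is correct and follows essentially the same route as the paper's proof: extract the bound $\sup\|[\tau_t^{H_n}(A),B]\|\le\chi(1)C_n(1+|t|^\beta)$ with $\E C_n\le e^{-\eta n^\rho}$, use Markov plus Borel--Cantelli to get an eventual almost-sure bound on $C_n$, invert for a deterministic lower bound on $t_n$, and handle $\beta=0$ separately. Your parameter $\delta\in(0,1-\alpha-\beta\gamma)$ is just the complementary substitution $\delta\mapsto 1-\delta$ of the paper's $\delta\in(\beta\gamma+\alpha,1)$, so the two arguments are the same up to cosmetic relabeling.
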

\begin{proof}
 For $\beta=0$ it is easy to see that $\Pr ( t_n(e^{-\alpha \eta n^\rho} )=\infty \text{ eventually} )=1$. Assume $\beta>0$. By assumption,
\be\label{eq:t_time_prop}
\sup_{\substack{A \in \A_0^1\\ B \in \A_n}}\Vert \tau^{H_n}_t(A),B]\Vert\leq \chi(1)C_n(1+|t|^\beta),
\ee
where $\E C_n \leq e^{-\eta n^\rho}$. Choose any $\delta$ such that $\beta \gamma+\alpha<\delta<1$. Let 
$$
A_n=\left\{ \chi(1)C_n\leq e^{-\delta \eta n^\rho} \right\}.
$$
By Markov's inequality,
$$
\Pr(A_n^c)\leq \chi(1)\frac{\E C_n}{e^{-\delta\eta n^\rho}}\leq \chi(1)e^{-(1-\delta)\eta n^\rho}.
$$
It follows from the Borel-Cantelli lemma that $\Pr(\mathbbm{1}_{A_n}=1 \text{ eventually})=1$. \eqref{eq:t_time_prop} implies that,
$$
\mathbbm{1}_{A_n}t_n(e^{-\alpha \eta n^\rho})^\beta\geq \mathbbm{1}_{A_n}\left(\frac{e^{-\alpha \eta n^\rho}}{\chi(1) C_n}-1\right)\geq (e^{(\delta-\alpha) \eta n^\rho}-1)\mathbbm{1}_{A_n}
$$
Therefore
$$
\mathbbm{1}_{A_n} \frac{e^{\gamma \eta n^\rho}}{t_n(e^{-\alpha \eta n^\rho})}\leq \frac{e^{\gamma \eta n^\rho}}{(e^{(\delta-\alpha)\eta n^\rho}-1)^{1/\beta}}
$$
Since $\gamma<(\delta-\alpha)/\beta$ and $\mathbbm{1}_{A_n}=1$ eventually with probability 1, it follows that $\frac{ e^{\gamma\eta n^\rho} }{ t_n(e^{-\alpha \eta n^\rho}) }\to0$ almost surely.\newline
\end{proof}

In the case of exponential dynamical localization there exists a family of perturbations such that the perturbed model still has long transmission times. These perturbations consist of additional nearest neighbor interactions that occur with low density at random positions. For this class of perturbations we can prove that the transmission time grows super linearly provided the perturbations are sufficiently sparse\footnote{
After this work appeared on the arXiv, similar perturbations were considered by De Roeck, Huveneers, and Olla, who proved subdiffusive dynamics in classical Hamiltonian chains \cite{de-roeck:2020}.}. 

\begin{theorem}\label{thm:mainthm}
Let $H_n^0\in\mathcal{A}_{[0,n]}$ be a sequence of random Hamiltonians defined over the probability space $(\Omega_0,\Pr_0)$ which are exponentially dynamically localized in the sense of Definition \ref{def:dynlocaliz} ($\rho=1$). Let $(\delta_x)_{x=0}^\infty$ be an i.i.d. sequence of Bernoulli random variables over the probability space $(\Omega_1,\Pr_1)$, with $\Pr_1(\delta_0=0)=p\in(0,1]$. Let $(\psi_{x})_{x=0}^\infty$ denote a uniformly bounded sequence with $\psi_{x}\in\mathcal{A}_{[x,x+1]}$ for all $x$. Consider the sequence of random Hamiltonians 
\begin{equation}
H_n(\omega)=H_n^0(\omega_0)+\sum_{x=0}^{n-1} \delta_x(\omega_1) \psi_{x}; 
\end{equation}
over the probability space $\Omega_0\times \Omega_1$ equipped with the product measure. If $t_n$ is the transmission time of $H_n$, then for any $\gamma>0$ and $\alpha\in(0,1/3)$ satisfying
\begin{equation}
\eta\left(\frac{1-3\alpha}{1-\alpha}\right)> 2[(\beta+1)\gamma-1] \log\left(\frac{1}{p}\right),
\end{equation}
\begin{equation}
\frac{n^\gamma}{t_n(e^{-\alpha \eta n})}\to 0
\end{equation}
in probability.
\end{theorem}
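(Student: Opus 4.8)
The plan is to prove a quantitative upper bound on $\sup_{A\in\mathcal{A}_0^1,\,B\in\mathcal{A}_n^1}\|[\tau_t^{H_n}(A),B]\|$ that holds for all $|t|\le n^\gamma$ on an event of probability tending to $1$ and is at most $e^{-\alpha\eta n}$ there, so that $t_n(e^{-\alpha\eta n})>n^\gamma$ on that event. Two sources of randomness are handled separately. First, condition on the Bernoulli configuration $\omega_1$ and record the structure it produces: with $J=\{x\in[0,n-1]:\delta_x(\omega_1)=1\}$ the active bonds, the perturbation $W=\sum_{x\in J}\delta_x\psi_x$ is supported on $\bigcup_{x\in J}[x,x+1]$, and its complement in $[0,n]$ is a union of ``insulating'' runs on which $H_n$ coincides with $H_n^0$. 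A Chernoff/union-bound estimate for the i.i.d. sequence $(\delta_x)$ shows that, with probability $\to1$ in $\omega_1$, the insulating runs carry a definite proportion of $[0,n]$, their number and lengths are typical, and (what the expansion below needs) weighted counts such as $\sum_{x\in J}e^{-\mu|c-x|}$ are controlled; the logarithmic scale $\log n/\log(1/p)$ for the longest insulating run is what ultimately brings $\log(1/p)$ into the hypothesis. Second, with $\omega_1$ fixed, apply Markov's inequality to the random variables $C_{[0,n];X,Y}$ of Definition \ref{def:dynlocaliz} (with $X$ a single site or a single bond and $Y$ a single site), together with a union bound retaining only pairs at distance $\ge d_0\sim\eta^{-1}\log n$: on an event of probability $\to1$ in $\omega_0$ one obtains the \emph{deterministic} bound $\|[\tau_t^{H_n^0}(A),B]\|\le\chi(1+|t|^\beta)\,e^{-\theta\eta\,d(X,Y)}$ for all such pairs, with $\theta<1$ as close to $1$ as desired (at the cost of shrinking the probability).

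On the resulting good event the argument is deterministic. Its core is a perturbed Lieb--Robinson estimate obtained by iterating the Duhamel identity
\[
\tau_t^{H_n}(A)=\tau_t^{H_n^0}(A)+\int_0^t\tau_s^{H_n}\!\big(i[W,\tau_{t-s}^{H_n^0}(A)]\big)\,ds .
\]
Commuting each term with $B\in\mathcal{A}_n^1$ and writing $W=\sum_{x\in J}\psi_x$ expresses the $k$-th order contribution as a sum over ``paths'' $0=x_0,x_1,\dots,x_k,x_{k+1}=n$ with $x_1,\dots,x_k\in J$, consecutive sites joined by a free evolution. The deterministic localization bound then supplies a spatial factor $e^{-\theta\eta|x_j-x_{j+1}|}$ for each leg, and since $\sum_j|x_j-x_{j+1}|\ge n$ the product is at most $e^{-\theta\eta n}$; splitting the exponent as $\tfrac{\theta\eta}{2}n+\tfrac{\theta\eta}{2}\sum_j|x_j-x_{j+1}|$ and summing the residual weights over the sparse set $J$ produces $\sum_{x_1,\dots,x_k\in J}\prod_j e^{-\tfrac{\theta\eta}{2}|x_j-x_{j+1}|}$, which the $\omega_1$-estimates bound by a $k$-th power. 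The temporal weight of order $k$ is the simplex integral of $\prod_j(1+|\Delta s_j|^\beta)$; here it is essential that $H_n^0$ is localized \emph{uniformly in time}, so that a free leg never acquires a velocity-enhanced factor and the only time dependence is the fixed polynomial $(1+|t|^\beta)$. Assembling the series gives a bound of the form $\sup_{A,B}\|[\tau_t^{H_n}(A),B]\|\le\chi(1+|t|^\beta)\,P(|t|)\,e^{-\tfrac{\theta\eta}{2}n}$, where $P$ and the effective decay rate are determined by the insulating data of $J$ on the good event; the factor $(1+|t|^\beta)^{k+1}$ inside the $k$-sum is responsible for the exponent $(\beta+1)\gamma$ appearing in the hypothesis.

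The last step is bookkeeping. On the good event, for $|t|\le n^\gamma$ the right-hand side is $\le e^{-\alpha\eta n}$ exactly when $\eta\big(\tfrac{1-3\alpha}{1-\alpha}\big)>2[(\beta+1)\gamma-1]\log(1/p)$: the factor $\tfrac{1-3\alpha}{1-\alpha}$ comes from optimizing $\theta$ and the amount of insulating region used against the choice of $\alpha$, while $(\beta+1)\gamma-1$ compares the temporal growth $|t|^{\beta+1}=n^{(\beta+1)\gamma}$ against the decay the insulating runs supply. Hence $\sup_{A,B}\|[\tau_t^{H_n}(A),B]\|\le e^{-\alpha\eta n}$ for all $|t|\le n^\gamma$ on the good event, so $t_n(e^{-\alpha\eta n})>n^\gamma$ there; since the good event has probability $\to1$, and carrying out the same estimate for $|t|\le n^{\gamma'}$ with $\gamma'>\gamma$ still satisfying the (strict) inequality upgrades this to $t_n(e^{-\alpha\eta n})/n^\gamma\to\infty$ on it, the stated convergence in probability follows.

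The step I expect to be the main obstacle is the deterministic perturbed Lieb--Robinson estimate, and in particular keeping $P(|t|)$ of genuinely controlled growth rather than exponential in $|t|$: a careless estimate of the iterated Duhamel series produces a factor $e^{c|t|}$, which is worthless against the threshold $e^{-\alpha\eta n}$ once $\gamma\ge1$. Defeating it requires using the sparseness of $J$ quantitatively --- not merely that $|J|$ is small but that the active bonds are separated by long insulating runs whose crossing contributes decay that offsets the time-growth term by term --- and matching the constants ($1-3\alpha$, $1-\alpha$, the factor $2$) that result from this balance is the delicate part.
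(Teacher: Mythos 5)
Your route and the paper's route both start from the interaction picture / Duhamel decomposition around $\tau_t^{H_n^0}$ and both recognize that the key obstacle is the time-growth of the perturbation series, but the mechanism you propose to control that growth does not work in the only regime where the theorem is nontrivial. Your $k$-th order term carries a spatial weight $\sum_{x_1,\dots,x_k\in J}\prod_j e^{-\frac{\theta\eta}{2}|x_j-x_{j+1}|}$, which you bound by $c_J^k$ with $c_J=\sup_x\sum_{y\in J}e^{-\frac{\theta\eta}{2}|x-y|}$. Since $J$ has positive density $\sim(1-p)$ on the good $\omega_1$-event, $c_J=\Theta(1)$: it is controlled by the density of $J$, not by the lengths of the insulating runs. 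Resumming the series then produces a factor $\sim e^{c_J|t|^{\beta+1}}=e^{c_J n^{\gamma(\beta+1)}}$ against a \emph{fixed} spatial decay $e^{-\theta\eta n/2}$, so the whole bound collapses as soon as $\gamma(\beta+1)>1$. But for $\gamma(\beta+1)\le1$ the hypothesis $\eta(1-3\alpha)/(1-\alpha)>2[(\beta+1)\gamma-1]\log(1/p)$ is automatically true (the right-hand side is $\le0$), so the theorem's content is entirely in the regime your bound cannot reach. You correctly flag this as the obstacle and say that ``long insulating runs whose crossing contributes decay'' must offset the time growth ``term by term,'' but the path decomposition you set up cannot deliver that: the spatial decay $e^{-\theta\eta n}$ is fixed regardless of $k$, whereas the time weight grows with $k$, and a polynomial extra factor of $n^{-c}$ per path (from crossing one $O(\log n)$-length insulating run) cannot fight a stretched exponential in $n$.

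What the paper does differently, and what makes it work, is that the effective Lieb--Robinson velocity of the interaction-picture dynamics $\tau_t^{H_n^I}$ is made \emph{polynomially small in $n$}, not $O(1)$. This is achieved by (i) passing to the interaction picture so that the perturbation $H_n^I(t)=\sum_x\delta_x\,\tau_t^{H_n^0}(\psi_x)$ can be re-expanded as a time-dependent interaction $\Phi_n$ with terms decaying exponentially in diameter (Proposition~\ref{interactiondecay}), and (ii) invoking the contracted-lattice version of the Lieb--Robinson bound (Theorem~\ref{LRbound} in the appendix): all sites outside two collared runs of zeros $\tilde I_1,\tilde I_2$ of length $\sim\theta\log_{1/p}n$ are identified, so that every term of $\Phi_n$ which does not span from one collared interval to the other is absorbed into an on-site term and drops out of the velocity. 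The F-norm of $\Phi_n$ restricted to pairs in $\tilde I_1\cup\tilde I_2$ is then $\sim n^{-\nu\lambda\eta\sigma\theta/\log(1/p)}$ (Lemma~\ref{fnormlemma}, Lemma~\ref{lemma:intLRbound}), so the time-growth factor is $e^{c_1 n^{-\text{power}}(|t|+|t|^{\beta+1})}$. That extra $n^{-\text{power}}$ is precisely what lets the spatial decay $e^{-\xi\nu(1-\lambda)\eta r_n/2}$, with $r_n\sim(1-\kappa)n$ the separation between the two insulating windows, win against $|t|=n^\gamma$ when the stated $\eta$-condition holds. Your outline does gesture at the right scale ($\log n/\log(1/p)$) and at the right balance ($1-3\alpha$ vs.\ $1-\alpha$), but it lacks the contraction step that converts the sparsity of $J$ into an $n$-dependent suppression of the velocity, which is the mechanism that makes the argument close.
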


Unfortunately we do not know how to prove a similar robustness result for models with a decay function $F$ that decays slower than exponentially. For example, certain anisotropic XY chains are only known to exhibit stretched exponential dynamical localization, as we note in Section \ref{subsec_xychain} 

Theorem \ref{thm:mainthm} concerns finite volume Hamiltonians. The following theorem shows that in certain cases one can work directly with the thermodynamic limit.

\begin{theorem}\label{thm:thermolimit}
Suppose that $\Phi_0$ is a random interaction over the probability space $(\Omega_0,\Pr_0)$ whose finite volume Hamiltonians are exponentially dynamically localized. Suppose that $(\delta_x)_{x\in \Ir}$ is a sequence of i.i.d. Bernoulli random variables over the probability space $(\Omega_1,\Pr_1)$, with $\Pr_1(\delta_0=0)=p\in(0,1]$. Let $(\psi_x)_{x\in\Ir}$ denote a uniformly bounded sequence with $\psi_x\in\A_{[x,x+1]}$ for all $x$. Let $\Phi_2$ be the random nearest neighbor interaction given by,
\be
\Phi_2(\{x,x+1\})=\delta_x\psi_x
\ee
for all $x\in\Ir$. Define the random interaction $\Phi(\omega)=\Phi_0(\omega_0)+\Phi_1(\omega_1)$ over the probability space $\Omega_0\times\Omega_1$ equipped with the product measure. If, for almost every $\omega_0\in\Omega_0$, there is a (possibly random) $F$-function $F$ such that $\Phi_0$ is $F$-normed, then the thermodynamic limit, $\tau_t$, of $\Phi$ exists almost surely. For any fixed $r\in \Nl$, define
\be
t_n(\epsilon)=\inf\{|t|: \sup_{\substack{A\in\A^1_{[-r,0]}\\ B\in\A^1_{[n,\infty)}}} \|[\tau_t(A),B]\|>\epsilon\}.
\ee
Then for any $\gamma>0$ and $\alpha\in(0,1/3)$ satisfying
\begin{equation}
\eta\left(\frac{1-3\alpha}{1-\alpha}\right)> 2[(\beta+1)\gamma-1] \log\left(\frac{1}{p}\right),
\end{equation}
\begin{equation}
\frac{n^\gamma}{t_n(e^{-\alpha \eta n})}\to 0
\end{equation}
in probability.
\end{theorem}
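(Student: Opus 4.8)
The plan is to obtain existence of $\tau_t$ from the standard $F$-norm machinery and then to deduce the transmission-time estimate from a mild generalization of Theorem \ref{thm:mainthm} via a Lieb--Robinson finite-volume approximation. For existence: $\Phi_1$ (the nearest-neighbor interaction $\Phi_1(\{x,x+1\})=\delta_x\psi_x$) has uniformly bounded terms, so $\|\Phi_1\|_F<\infty$ for every $F$-function $F$; hence whenever $\Phi_0(\omega_0)$ is $F$-normed, so is $\Phi(\omega)=\Phi_0(\omega_0)+\Phi_1(\omega_1)$ by the triangle inequality for $\|\cdot\|_F$. By the standard existence theorem for dynamics of $F$-normed interactions (\cite{nachtergaele:2019}), $\tau_t$ then exists for a.e.\ $\omega_0$, hence a.e.\ $\omega$, as a strongly continuous group of $*$-automorphisms of $\A_\Ir$ independent of the exhausting sequence, and one gets Lieb--Robinson bounds both for $\tau_t$ and for the finite-volume dynamics $\tau_t^{H_\Lambda}$, with velocity and spatial decay controlled by $\|\Phi\|_F$ and $F$.

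For the transmission-time estimate, fix $\gamma,\alpha$ as in the hypothesis; since $\alpha\mapsto\eta\tfrac{1-3\alpha}{1-\alpha}$ is continuous and strictly decreasing on $(0,1/3)$, choose $\tilde\alpha\in(\alpha,1/3)$ with $\eta\tfrac{1-3\tilde\alpha}{1-\tilde\alpha}>2[(\beta+1)\gamma-1]\log(1/p)$ still. For $A\in\A^1_{[-r,0]}$, $B\in\A^1_{[n,\infty)}$ and $\Lambda_n=[-M_n,M_n]$, bound $\|[\tau_t(A),B]\|\le\|[\tau_t^{H_{\Lambda_n}}(A),B]\|+2\|\tau_t(A)-\tau_t^{H_{\Lambda_n}}(A)\|$. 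On a set $\Omega_0'$ with $\Pr_0(\Omega_0')\ge1-\epsilon'$, on which $\|\Phi_0\|_F$ and the decay of $F$ are deterministically bounded, the Lieb--Robinson error is $\le\tfrac14 e^{-\tilde\alpha\eta n}$ uniformly over such $A,B$ and over $|t|$ up to any fixed polynomial multiple of $n^\gamma$, provided $M_n$ is taken large enough -- a deterministic choice, polynomial in $n$ when $F$ decays exponentially and at worst exponential in $n$ otherwise. Now $H_{\Lambda_n}=H^0_{\Lambda_n}+\sum_x\delta_x\psi_x$ with $H^0_{\Lambda_n}=\sum_{X\subseteq\Lambda_n}\Phi_0(X)$ is of the form treated by Theorem \ref{thm:mainthm}, the family $\{H^0_\Lambda\}$ being exponentially dynamically localized with the same $\eta,\beta,\chi$ uniformly over finite intervals (by hypothesis and Definition \ref{def:dynlocaliz}), with source and target at mutual distance $n$ inside $\Lambda_n$. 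The corresponding generalization of Theorem \ref{thm:mainthm} -- source on the interval $[-r,0]$, target on the half-line $[n,\infty)$, both inside $\Lambda_n$ -- then gives, for each fixed $\epsilon''>0$, $\Pr\big(\sup_{|t|\le n^\gamma/\epsilon''}\sup_{A,B}\|[\tau_t^{H_{\Lambda_n}}(A),B]\|>e^{-\tilde\alpha\eta n}\big)\to0$. Intersecting with $\Omega_0'$ and using $\tfrac32 e^{-\tilde\alpha\eta n}\le e^{-\alpha\eta n}$ for $n$ large, we get $\sup_{A,B}\|[\tau_t(A),B]\|\le e^{-\alpha\eta n}$ for all $|t|\le n^\gamma/\epsilon''$ on an event of probability $\ge1-\epsilon'-o(1)$; hence $t_n(e^{-\alpha\eta n})\ge n^\gamma/\epsilon''$ there, so $\Pr(n^\gamma/t_n(e^{-\alpha\eta n})\ge\epsilon'')\le\epsilon'+o(1)$, and letting $n\to\infty$ then $\epsilon'\to0$ proves the claim.

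The one point beyond routine bookkeeping is the generalization of Theorem \ref{thm:mainthm} invoked above: there the source and target are the two endpoints of the chain, while here they lie deep inside a much larger (possibly exponentially larger) volume, the source is an interval, the target a half-line, and $\sum_x\delta_x\psi_x$ acts at far more sites than lie between source and target. One must check that the proof of Theorem \ref{thm:mainthm} only feels the perturbation sites between source and target and the distance $n$ between them, the other perturbation sites being absorbed by the ambient Lieb--Robinson/quasi-locality bounds and the half-line target handled just as the infinite-$Y$ case is handled in the Lemma following Definition \ref{def:dynlocaliz}. This is expected but needs to be checked. Alternatively, one can avoid large volumes altogether by rerunning the proof of Theorem \ref{thm:mainthm} directly in the thermodynamic limit, using that $\{H^0_\Lambda\}$ has a thermodynamic limit $\tau^0_t$ -- since $\Phi_0$ is $F$-normed -- which inherits exponential dynamical localization by that same Lemma.
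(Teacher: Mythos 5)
Your primary route (truncate to a large box $\Lambda_n=[-M_n,M_n]$, control the Lieb--Robinson error between $\tau_t$ and $\tau_t^{H_{\Lambda_n}}$ uniformly over $|t|\lesssim n^\gamma$ on a high-probability set $\Omega_0'$, then apply a suitably generalized Theorem \ref{thm:mainthm}) is genuinely different from the paper's. The paper instead isolates the perturbation at a fixed finite volume and factorizes the thermodynamic-limit dynamics via Proposition \ref{prop:intpic}: keeping $H_\Lambda^2$ finite while sending the background to the thermodynamic limit gives $\lim_{\Lambda_1\uparrow\Ir}\tau_t^{H_{\Lambda_1}^1+H_\Lambda^2}=\tau_t^{\Lambda,I}\circ\tau_t^0$, where the interaction-picture Hamiltonian $\tau_t^0(H_\Lambda^2)$ remains a finite sum of quasi-local terms. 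One can then show the conditional LR bound of Lemma \ref{lemma:intLRbound} holds for $\tau_t^{\Lambda,I}$ uniformly in $\Lambda\supseteq[0,n]$, pass to $\Lambda\uparrow\Ir$, and re-run the remainder of the proof of Theorem \ref{thm:mainthm} verbatim. What the paper's factorization buys is that no quantitative LR velocity estimate (hence no choice of $M_n$ and no high-probability deterministic truncation set $\Omega_0'$) is ever needed, and the number of perturbation terms in play stays finite at every stage; it also avoids the issue you flagged of perturbation sites lying well outside the interval between source and target, since those get cleanly absorbed by the $\Lambda\uparrow\Ir$ limit at the last step rather than entering the combinatorics. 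What your route buys is that it only invokes finite-volume objects and a generalization of the theorem already proved. The gap you honestly flag — carrying the proof of Theorem \ref{thm:mainthm} over to an interval source at $[-r,0]$, a half-line target (equivalently $[n,M_n]$ once the volume is truncated), and a sea of extra $\delta_x$'s outside $[0,n]$ — is fillable but not free: Lemma \ref{fnormlemma}'s sums over $z$ already extend to infinity so the extra sites cost nothing, and the contracted-lattice construction in the appendix is stated for an arbitrary collection $\mathcal{I}$ of intervals, so it accommodates the shifted geometry; but you should say this rather than leave it at ``expected but needs to be checked.'' Your closing alternative gestures at the paper's strategy but misses the double-limit structure of Proposition \ref{prop:intpic}, which is the ingredient that makes ``rerunning Theorem \ref{thm:mainthm} in the thermodynamic limit'' literally possible.
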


\section{Proofs of Main Results}\label{sec:proofs}

\subsection{Proofs of results about LIOMs }\label{sec:proofs_LIOM}

Showing that LIOMs of the fist kind imply dynamical localization is a straightforward application of the quasi-locality properties of the LIOMs.

\begin{proof}[Proof of Theorem \ref{prop:LIOM_LRbound}]
For any $A\in\mathcal{A}_X^1$, $B\in\mathcal{A}_Y^1$,
\begin{equation}
\|[\tau_t^{H_n}(A),B]\|=\|[\tau_t^{\tilde{H_n}}(\tilde{A}),\tilde{B}]\|,
\end{equation}
where $\tilde{O}=U_n^*OU_n$ for an observable $O$. Using the quasi-locality of the unitary $U_n$ specified in Eq. \eqref{liomqloc}, by a standard
application of conditional expectations (see, for example, \cite[Section IV.A]{nachtergaele:2019}),
we can find (random) local observables ${A}_{\lambda}\in\mathcal{A}_{X_{n,\lambda}}$ and ${B}_{\lambda}\in\mathcal{A}_{Y_{n,\lambda}}$, with $\|A_{n,\lambda}\|,\|B_{n,\lambda}\|\leq 1$ such that,
\begin{align}
\|\tilde{A}-A_{\lambda}\|\leq D_{n,X,\lambda}\\
\|\tilde{B}-B_{\lambda}\|\leq D_{n,Y,\lambda},
\end{align}
where $D_{n,X,\lambda}$ and $D_{n,Y,\lambda}$ have the desired expectation bound. Therefore,
\begin{align}\label{eq:liom_LRbound}
\|[\tau_t^{\tilde{H_n}}(\tilde{A}),\tilde{B}]\|&\leq 2\left(D_{X,\lambda,n}+D_{Y,\lambda,n}\right)+\|[\tau_t^{\tilde{H}_n}(A_{\lambda}),B_{\lambda}]\|.
\end{align}
Now,
\begin{equation}
\|[\tau_t^{\tilde{H}_n}(A_{\lambda}),B_{\lambda}]\|=\|[\tau_t^{\tilde{H}_{X,Y}}(A_{\lambda}),B_{\lambda}]\|
\end{equation}
where 
\begin{equation}
\tilde{H}_{X,Y}(\omega)=\sum_{\substack{Z\subset[0,n]:\\ Z\cap X_{n,\lambda},Z\cap Y_{n,\lambda}\neq\varnothing}}
\sum_{{\bf m}\in\prod_{x\in Z} \{2,...,d_x\}}\phi_n({\bf m},Z)\prod_{z\in Z}S_{{\bf m}_z;z}
\end{equation}
Note that $\tilde{H}_{X,Y}$ consist of the terms of $\tilde{H}_n$ which do not in general commute with either $A_{\lambda}$ or $B_{\lambda}$. If $f(t)=[\tau_t^{\tilde{H}_{X,Y}}(A_\lambda),B_\lambda]$, then
\begin{equation}
f'(t)=i[[\tilde{H}_{X,Y},\tau_t^{\tilde{H}_{X,Y}}(A_\lambda)],B_\lambda]=-i[f(t),\tilde{H}_{X,Y}]-i[[B_\lambda,\tilde{H}_{X,Y}],\tau_t^{\tilde{H}_{X,Y}}(A_\lambda)]
\end{equation}
Since the first term on the right is norm preserving, we have that,
\begin{equation}
\|[\tau_t^{\tilde{H}_{X,Y}}(A_\lambda),B_\lambda]\|\leq 4 |t| \|\tilde{H}_{X,Y}\|.
\end{equation}
The estimate,
\begin{align*}
\|\tilde{H}_{X,Y}\| &\leq \sum_{\substack{Z\subset[0,n]:\\ Z\cap X_{n,\lambda},Z\cap Y_{n,\lambda}\neq\varnothing}} \left\|\sum_{{\bf m}\in\prod_{x\in Z} \{2,...,d_x\}}\phi_n({\bf m},Z)\prod_{z\in Z} S_{{\bf m}_z;z} \right\| \\
&\leq \sum_{\substack{x\in X_{n,\lambda}\\y\in Y_{n,\lambda}}}\sum_{\substack{Z:\\x,y\in Z}} \left\|\sum_{{\bf m}\in\prod_{x\in Z} \{2,...,d_x\}}\phi_n({\bf m},Z)\prod_{z\in Z} S_{{\bf m}_z;z} \right\| \leq C_{n}(\omega) \sum_{\substack{x\in X_{n,\lambda}\\y\in Y_{n,\lambda}}} F(|x-y|),
\end{align*}
together with \eqref{eq:liom_LRbound} completes the proof.
\end{proof}

The existence of LIOMs of the second kind for uniform-in-time dynamically localized systems follows from a combination of quasi-locality 
arguments and compactness.

\begin{proof}[Proof of Theorem \ref{thm:thermolioms}]
We first show how to construct a sequence $T_n$ for which the limit in \eq{time_average} exists almost surely for any dynamics that is sufficiently localized uniformly in time.
For $A\in\A^1_X$ and $T>0$, define
$$
A_T = \frac{1}{T}\int_0^T \tau_t(A) dt.
$$
$A_T$ is random since $\tau_t$ is. 

For each $N\in\Nl$, let $\Pi_N$ denote the conditional expectation $\A^{\rm loc} \to \A_{X(N)}$ defined as the limit of the normalized partial trace 
over the complement of $X(N)=\{y\in\Ir: d(y,X)<N\}$ (see \cite[Section 4.2]{nachtergaele:2019}). Since the dynamics $\tau_t$ is assumed to satisfy \eq{uniform_exp_loc}, we
have 
\be 
\E ( \sup_T\Vert \Pi_N(A_T) -A_T\Vert ) \leq C F(N)
\label{quasi-local-time-averages}\ee
where $C=2\chi(|X|)$. In particular, $\sum_{N=1}^\infty F(N)<\infty$ implies that
\be\label{as_convergence}
\lim_N \sup_T \Vert \Pi_N(A_T) -A_T\Vert = 0  \text{ almost surely}
\ee
 Since $\A^1_{X(N)} $ is compact, there exists a sequence $(T^{(N)}_n)_{n\geq 1}$, and $A(N) \in \A^1_{X(N)}$ such that
$$
\lim_n \Pi_N(A_{T^{(N)}_n})= A(N).
$$
We can pick the sequences $(T^{(N)}_n)_{n\geq 1}$ such that $(T^{(N+1)}_n)_{n\geq 1}$ is a subsequence of $(T^{(N)}_n)_{n\geq 1}$, for all $N$.
Fix $\epsilon >0$, and let $N\leq M$. Choose $K(N,M) $ such that for all $n\geq K(N,M)$, we have
$$
\Vert \Pi_N(A_{T^{(N)}_n}) - A(N)\Vert \leq \epsilon , \ \Vert \Pi_M(A_{T^{(M)}_n}) - A(M)\Vert \leq \epsilon.
$$
Since $N\leq M$, $(T^{(M)}_n)_{n\geq 1}$ is a subsequence of $(T^{(N)}_n)_{n\geq 1}$. Therefore, we also have
$$
\Vert \Pi_N(A_{T^{(M)}_n}) - A(N)\Vert \leq \epsilon , \text{ for all } n\geq K(N,M).
$$
Using these bounds we have
\beann
\Vert A(N) - A(M)\Vert &\leq& 2\epsilon + \Vert \Pi_N(A_{T^{(M)}_n}) - \Pi_M(A_{T^{(M)}_n})\Vert\\
&\leq & 2\epsilon + \Vert \Pi_N(A_{T^{(M)}_n}) - A_{T^{(M)}_n}\Vert +  \Vert \Pi_M(A_{T^{(M)}_n}) - A_{T^{(M)}_n}\Vert\\
&\leq& 2\epsilon +  \sup_T \Vert \Pi_N(A_T) -A_T\Vert+ \sup_T \Vert \Pi_M(A_T) -A_T\Vert.
\eeann
Since $\epsilon >0$ is arbitrary, this estimate along with \eq{as_convergence} shows that $(A(N))_N$ is almost surely a Cauchy sequence in $\A_\Ir$. Denote its limit by $\tilde A$.

 We can now pick an increasing sequence $K_N$ such that for all $n \geq K_N$ we have
$$
\Vert  \Pi_N(A_{T^{(N)}_n}) -A(N)\Vert \leq \frac{1}{N}.
$$
Then
$$
\lim_N  \Pi_N(A_{T^{(N)}_{K_N}}) = \lim_N A(N) = \tilde A.
$$
Since we also have
$$
\Vert \Pi_N(A_{T^{(N)}_{K_N}}) -A_{T^{(N)}_{K_N}} \Vert \leq \sup_T \Vert \Pi_N(A_T) -A_T\Vert,
$$
we can conclude the convergence of the sequence of time averages:
\be\label{time-average_convergence} 
\lim_N A_{T^{(N)}_{K_N}} =\tilde A.
\ee

The time-invariance of $\tilde A$ is obvious from the fact that it is the limit of time averages as in \eq{time-average_convergence}.
By taking the $\limsup$ of \eq{quasi-local-time-averages} we also obtain a quasi-locality estimate for $\tilde A$:
\be 
\E ( \Vert [\tilde A, B]\Vert ) \leq C F(d(X,\supp B))
\ee

We can now apply this to $A=h_x$ and, possibly after taking another subsequence, obtain a sequence of times $T_n$ such that for all $x\in\Ir$,
\be
\tilde h_x= \lim_{n\to\infty} \frac{1}{T_n}\int_0^{T_n} \tau_t (h_x) dx.
\ee
are well-defined, time-invariant, and quasi-local. The model is assumed to be finite range, so the constant $C$ can be chosen to be uniform in $x$.

Finally, the quasi-local Hamiltonians $\tilde H_\Lambda$ defined by
$$
\tilde H_\Lambda = \sum_{x\in\Lambda} \tilde h_x,
$$
generate the same dynamics $\tau_t$ in the thermodynamic limit. To see the last point we once more have to argue we can interchange two limits, which we do next.

Let $X$ be finite, $A\in\A_X^1$, and $\epsilon >0$ . Fix a sufficiently large positive integer $M$ such that for all $\Lambda$ containing $X(M)$ we have
\bea
\sum_{x\in \Lambda} [h_x,A]=\delta(A).
\eea
Then, we have
\be\label{difference1}
\Vert \delta(A) - \tilde\delta(A)\Vert \leq \left\Vert \sum_{x\in \Lambda} [h_x , A] -  \sum_{x\in \Lambda} [\tilde h_x , A] \right\Vert 
+ \sum_{x\notin\Lambda}\Vert [\tilde h_x, A ] \Vert
\ee
Then, for any $L,n\in\Nl$, starting from \eq{difference1}, we obtain the following estimate:
\beann
\Vert \delta(A) - \tilde\delta(A)\Vert &\leq&\left\Vert \sum_{x\in X(M+L)} [h_x , A] -  \sum_{x\in X(M+L)} [\tilde h_x , A] \right\Vert 
+\sum_{x\notin X(M+L)}\Vert [ \tilde h_x, A ] \Vert \\
&=& \left\Vert \left[ \left(\sum_{x\in X(M+L)}  \frac{1}{T_n}\int_0^{T_n} \tau^{(X(M+L))}_t(h_x)  \right), A\right] -  \sum_{x\in X(M+L)} [\tilde h_x , A] \right\Vert \\
&&\quad + \sum_{x\notin X(M+L)}\Vert [ \tilde h_x, A ] \Vert \\
&\leq & \sum_{x\in X(M+L)\setminus X(M)}\left( \sup_{t\in\Rl} \Vert [\tau_t^{(X(M+L))}( h_x),A]\Vert + \Vert [\tilde{h}_x,A]\|\right)\\
&&\quad + \left\Vert \sum_{x\in X(M)} \left[ \frac{1}{T_n}\int_0^{T_n} \tau_t^{(X(M+L))}(h_x)\,dt-\tilde{h}_x,A\right]\right \Vert +\sum_{x\notin X(M+L)} \Vert [\tilde{h}_x,A] \Vert
\eeann
Therefore, almost surely
\beann
\Vert\delta(A)-\tilde{\delta}(A)\Vert&\leq& \liminf_{L\to\infty} \sum_{x\notin X(M)} \left( \sup_{t\in\Rl} \Vert [\tau_t^{(X(M+L))}( h_x),A]\Vert + \Vert [\tilde{h}_x,A]\|\right)\\
&&\quad + \left\Vert \sum_{x\in X(M)} \left[ \frac{1}{T_n}\int_0^{T_n} \tau_t(h_x)\,dt-\tilde{h}_x,A\right]\right \Vert
\eeann
Letting $n\to\infty$ in this inequality gives,
$$
\Vert\delta(A)-\tilde{\delta}(A)\Vert\leq  \liminf_{L\to\infty} \sum_{x\notin X(M)} \left( \sup_{t\in\Rl} \Vert [\tau_t^{(X(M+L))}( h_x),A]\Vert + \Vert [\tilde{h}_x,A]\|\right)
$$
almost surely. By Fatou's lemma,
$$
\E  \liminf_{L\to\infty} \sum_{x\notin X(M)} \left( \sup_{t\in\Rl} \Vert [\tau_t^{(X(M+L))}( h_x),A]\Vert + \Vert [\tilde{h}_x,A]\|\right)\leq 4C\sum_{d=M}^\infty F(d)
$$
This upper bound is summable in $M$, therefore,
$$
\lim_{M\to\infty}  \liminf_{L\to\infty} \sum_{x\notin X(M)} \left( \sup_{t\in\Rl} \Vert [\tau_t^{(X(M+L))}( h_x),A]\Vert + \Vert [\tilde{h}_x,A]\|\right)=0
$$
almost surely, which proves that $\delta(A)=\tilde{\delta}(A)$ with probability 1.
\end{proof}

\subsection{Proofs of results about transmission time}

We will prove Theorem \ref{thm:mainthm} by utilizing the interaction picture decomposition of the Heisenberg dynamics $\tau_t^{H_n}=\tau_t^{H^I_n}\circ \tau_t^{H_n^0}$, where $H_n^I$ is the time dependent random Hamiltonian given by,
\begin{equation}\label{inthamiltonian}
H_n^I(\omega,t)=\sum_{x=0}^{n-1} \delta_x(\omega_1)\tau_t^{H_n^0(\omega_0)}(\psi_{x})
\end{equation}
We make use of this decomposition of the dynamics in the following way: for an integer $d_n\in[0,n]$, for any $A\in\mathcal{A}_0^1$ by quasilocality of the the dynamics $\tau_t^{H_n^0}$ we can write
\be\label{eq:quasiloc}
\tau_t^{H_n^0(\omega_0)}(A)=\tilde{A}(\omega_0,t)+E(\omega_0,t),
\ee
where $\supp(\tilde{A})\subset [0,d_n]$, $\|\tilde A\|\leq1$ and 
\be
\| E(\omega_0,t)\|\leq \chi(1)C_{d_n}(\omega_0)(1+|t|^\beta)
\ee
where $\E C_{d_n}\leq e^{-\eta (d_n+1)}$. Eq. \eqref{eq:quasiloc} gives the following bound,
\be\label{eq:splitdynamics}
\sup_{\substack{A\in\mathcal{A}_0^1\\ B\in\mathcal{A}_n^1} }\| [\tau_t^{H_n(\omega)}(A),B]\| \leq 2\chi(1)C_{d_n}(\omega_0)(1+|t|^\beta)+\sup_{\substack{A\in\mathcal{A}_{[0,d_n]}^1\\ B\in\mathcal{A}_n^1} }\|\tau_t^{H_n^I(\omega)}(A),B]\|.
\ee
To proceed we will need to derive a suitable Lieb-Robinson bound for the dynamics $\tau_t^{H_n^I}$. The first step in deriving such a bound is to write $H_n^I$ in terms of a suitable time dependent random interaction.

First we introduce some notation. Let $\Lambda_n=[0,n]$ and $\Lambda_{n;x}(m)=\{y\in\Lambda_n: d(y,\{x,x+1\})\leq m\}$. We write
\be\label{ptracedecomp}
\tau_t^{H_n^0(\omega_0)}(\psi_{x})=\sum_{m\geq 0}\psi_{n;x}^{(m)}(\omega_0,t),
\ee
where 
\be
\psi_{n;x}^{(m)}(t)=\begin{cases}
\text{Tr}_{\mathcal{H}_{\Lambda_n\setminus \Lambda_{n;x}(0)}}\left(\tau_t^{H_n^0}(\psi_{x})\right) & \text{ if }m=0\\
[\text{Tr}_{\mathcal{H}_{\Lambda_n\setminus\Lambda_{n;x}(m)}}-\text{Tr}_{\mathcal{H}_{\Lambda_n\setminus\Lambda_{n;x}(m-1)}}]\left(\tau_t^{H_n^0}(\psi_{x})\right) & \text{ if } m\geq 1
\end{cases}
\ee
Here $\text{Tr}$ denotes the normalized partial trace operator. Note that the sum in Eq. \eqref{ptracedecomp} is actually a finite sum, since $\psi_{n;x}^{(m)}=0$ for any $m$ such that $\Lambda_{n;x}(m-1)=\Lambda_n$.

\begin{proposition}\label{interactiondecay}
$\supp(\psi_{n;x}^{(m)}(t))\subseteq \Lambda_{n;x}(m)$ for all $m\geq0$ and
\begin{equation}
\|\psi_{n;x}^{(m)}(t)\|\leq\begin{cases}
\|\psi_{x}\| & \text{ if }m=0\\
\|\psi_{x}\|C_{n;x}^{(m)} (1+|t|^\beta) & \text{ if } m\geq 1
\end{cases}
\end{equation}
where $C_{n;x}^{(m)}$ is a non-negative random variable satisfying
\begin{equation}
\E C_{n;x}^{(m)}\leq 2\chi(2)e^{-\eta m}
\end{equation}
\end{proposition}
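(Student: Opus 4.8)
The plan is to read all three claims straight off the quasi-locality estimates that feed the normalized partial trace, using the exponential dynamical localization of $H_n^0$ only to control the decay in $m$. I would fix $t$ and $\omega_0$, write $A:=\tau_t^{H_n^0(\omega_0)}(\psi_x)$, and for $S\subseteq\Lambda_n$ let $\Pi_S\colon\A_{\Lambda_n}\to\A_S$ denote the conditional expectation given by the normalized partial trace over $\H_{\Lambda_n\setminus S}$, so that $\psi_{n;x}^{(0)}=\Pi_{\Lambda_{n;x}(0)}(A)$ and $\psi_{n;x}^{(m)}=\Pi_{\Lambda_{n;x}(m)}(A)-\Pi_{\Lambda_{n;x}(m-1)}(A)$ for $m\geq1$. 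The support statement is then immediate, since $\Pi_S$ has range $\A_S$ and $\Lambda_{n;x}(m-1)\subseteq\Lambda_{n;x}(m)$; and the $m=0$ bound is equally quick, because $\Pi_{\Lambda_{n;x}(0)}$ is unital completely positive, hence a contraction, while $\tau_t^{H_n^0}$ is a $*$-automorphism, so $\|\psi_{n;x}^{(0)}(t)\|\leq\|A\|=\|\psi_x\|$.

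For $m\geq1$ the key input is the standard estimate $\|A-\Pi_S(A)\|\leq\sup_{B\in\A_{\Lambda_n\setminus S}^1}\|[A,B]\|$, which one obtains by writing $\Pi_S$ as a Haar average $\int U^*(\cdot)U\,dU$ over the unitary group of $\A_{\Lambda_n\setminus S}$ and noting $A-U^*AU=U^*[U,A]$; alternatively it can be cited from \cite[Sec.~4.2]{nachtergaele:2019}. I would then bound, by the triangle inequality,
\[
\|\psi_{n;x}^{(m)}(t)\|\leq\|A-\Pi_{\Lambda_{n;x}(m)}(A)\|+\|A-\Pi_{\Lambda_{n;x}(m-1)}(A)\|\leq\sum_{k=m-1}^{m}\sup_{B\in\A_{Y_k}^1}\|[\tau_t^{H_n^0}(\psi_x),B]\|,
\]
where $Y_k:=\Lambda_n\setminus\Lambda_{n;x}(k)$. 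Each $Y_k$ is a subset of $\Lambda_n$ contained in $[\min X,\max X]^c$ for $X=\{x,x+1\}$ (because $\Lambda_{n;x}(k)\supseteq\{x,x+1\}$) and satisfies $d(X,Y_k)\geq k+1$ by the definition of $\Lambda_{n;x}(k)$. Hence, after normalizing $\psi_x$, Definition~\ref{def:dynlocaliz} with $\rho=1$ (so $F(r)=e^{-\eta r}$) and $|X|=2$ gives, for all $t$ and all $B\in\A_{Y_k}^1$,
\[
\|[\tau_t^{H_n^0}(\psi_x),B]\|\leq\|\psi_x\|\,\chi(2)\,C_{\Lambda_n;X,Y_k}\,(1+|t|^\beta),\qquad \E\,C_{\Lambda_n;X,Y_k}\leq e^{-\eta(k+1)}.
\]
Setting $C_{n;x}^{(m)}:=\chi(2)\big(C_{\Lambda_n;X,Y_m}+C_{\Lambda_n;X,Y_{m-1}}\big)$ yields $\|\psi_{n;x}^{(m)}(t)\|\leq\|\psi_x\|\,C_{n;x}^{(m)}(1+|t|^\beta)$ and, using $e^{-\eta}\leq1$, $\E\,C_{n;x}^{(m)}\leq\chi(2)\big(e^{-\eta(m+1)}+e^{-\eta m}\big)\leq2\chi(2)e^{-\eta m}$, which is the claimed bound. (Using instead $\psi_{n;x}^{(m)}=\Pi_{\Lambda_{n;x}(m)}\big(A-\Pi_{\Lambda_{n;x}(m-1)}(A)\big)$ one could even remove the factor $2$, but that refinement is not needed.)

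I do not expect a genuine obstacle here: the content is entirely the quasi-locality lemma plus Definition~\ref{def:dynlocaliz}, and the only things that need care are the geometric bookkeeping — verifying that $Y_k\subseteq[\min X,\max X]^c$ and that $d(X,Y_k)\geq k+1$, so the exponential decay enters at the correct scale — and the harmless degenerate cases where $\Lambda_{n;x}(k)=\Lambda_n$ (then $Y_k=\varnothing$, the corresponding term vanishes, and the sum in \eqref{ptracedecomp} is finite) or $\psi_x=0$, both of which are trivial.
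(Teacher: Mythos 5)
Your proof is correct and follows essentially the same route as the paper's: read the support claim and the $m=0$ contraction off the normalized partial trace, then for $m\geq 1$ use the triangle inequality together with the standard quasi-locality estimate $\|A-\Pi_S(A)\|\leq\sup_{B\in\A_{S^c}^1}\|[A,B]\|$ and Definition~\ref{def:dynlocaliz} applied with $X=\{x,x+1\}$, arriving at the same choice $C_{n;x}^{(m)}=\chi(2)\bigl(C_{\Lambda_n;X,Y_m}+C_{\Lambda_n;X,Y_{m-1}}\bigr)$ and the bound $\E C_{n;x}^{(m)}\leq\chi(2)(e^{-\eta(m+1)}+e^{-\eta m})\leq 2\chi(2)e^{-\eta m}$. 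You are in fact slightly more careful than the paper about the geometric bookkeeping ($d(X,Y_k)\geq k+1$) and about making explicit the expectation step that the paper elides with ``follows from the assumptions.''
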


\begin{proof}
$\supp(\psi_{n;x}^{(m)}(t))\subseteq \Lambda_{n;x}(m)$ follows from properties of the partial trace. The bound $\|\psi_{n;x}^{(0)}(t)\|\leq \|\psi_{x}\|$ is immediate. For $m\geq 1$,
\begin{align}
\|\psi_{n;x}^{(m)}(t)\|&\leq \|\tau_t^{H_n^0}(\psi_{x})-\Tr_{\mathcal{H}_{\Lambda_n\setminus\Lambda_{n;x}(m)}}\left(\tau_t^{H_n^0}(\psi_{x})\right)\|\\
&+\|\tau_t^{H_n^0}(\psi_{x})-\Tr_{\mathcal{H}_{\Lambda_n\setminus\Lambda_{n;x}(m-1)}}\left(\tau_t^{H_n^0}(\psi_{x})\right)\|\nonumber\\
&\leq \|\psi_{x}\| \chi(2)\left(C_{\Lambda_n;\Lambda_x,\Lambda_n\setminus \Lambda_{n;x}(m)}+C_{\Lambda_n;\Lambda_x,\Lambda_n\setminus \Lambda_{n;x}(m-1)}\right)|t|^\beta\\
&\equiv \|\psi_{x}\| C_{n;x}^{(m)}(1+ |t|^\beta).
\end{align}
The expectation bound on $C_{n;x}^{(m)}$ follows from the assumptions.
\end{proof}

The decomposition given in Eq. \eqref{ptracedecomp} provides a way to write $H_n^I(t)$ in terms of a random interaction. Define $\Phi_n(\omega,t):\mathcal{P}(\Lambda_n)\to \mathcal{A}_{\Lambda_n}$ by, 
\begin{equation}\label{eq:random_interaction}
\Phi_n(\omega,t)(X)=\sum_{\substack{(x,m):\\ \Lambda_{n;x}(m)=X}}\delta_x(\omega_1) \psi_{n;x}^{(m)}(\omega_0,t).
\end{equation}
Then $H_n^I=\sum_{X\subseteq[0,n]}\Phi_n(X)$ follows from Eq. \eqref{ptracedecomp}. \newline

We will use Theorem 3.1 of \cite{nachtergaele:2019} in order to obtain a Lieb-Robinson bound for the dynamics generated by $H_n^I$. If we apply that theorem directly to $\Phi_n$, with a suitable decaying function $F$, we obtain a Lieb-Robinson bound with a time growth factor of
\be\label{eq:timefactor0}
\exp\left( \int_0^t\sup_{x,y\in[0,n]}\frac{1}{F(|x-y|)}\sum_{\substack{X\subseteq[0,n]\\ x,y\in X}} \|\Phi_n(\omega,s)(X)\|ds\right)
\ee
This will not be of any use to us, as
\be
\sup_{x,y\in[0,n]}\frac{1}{F(|x-y|)}\sum_{\substack{X\subseteq[0,n]\\ x,y\in X}} \|\Phi_n(\omega,s)(X)\|
\ee
will be of order 1 due to the presence of non-zero $\delta_x$. To remedy this we observe that the methods used in \cite{nachtergaele:2019} produce Lieb-Robinson bounds which are independent of on-site terms in the interaction and also do not depend on the dimension of the Hilbert spaces at each site. This allows us to define a new lattice for the model, which is effectively a subset of $[0,n]$, by identifying certain spins which forces certain interaction terms to become on-site terms. As we explain below, we will be able to obtain a better Lieb-Robinson bound using this method.  Specifically, given $\Gamma\subset[0,n]$, we can define the lattice to obtain a Lieb-Robinson bound for the dynamics generated by $H_n^I$ with a time growth factor of
\be\label{eq:timefactor}
\exp\left( \int_0^t\sup_{x,y\in \Gamma}\frac{1}{F(|x-y|)}\sum_{\substack{X\subseteq [0,n]\\x,y\in X}} \|\Phi_n(\omega,s)(X)\|ds\right).
\ee
Note than in Eq. \eqref{eq:timefactor} the supremum in the exponent is taken over pairs of points $x,y\in \Gamma$, as opposed to in Eq. \eqref{eq:timefactor0} where all possible pairs of points in $[0,n]$ enter. The sum in Eq. \eqref{eq:timefactor} therefore excludes any interaction term whose support does not contain a point of $\Gamma$. The arguments for obtaining such a Lieb-Robinson bound given the subset $\Gamma$ are given in detail in the appendix.

It remains to specify how $\Gamma$ should be chosen. We know that intervals $I$ of length $L\sim \log_{1/p}(n)$ with the property that $\delta_x=0$ for all $x\in I$ exist with high probability. The interaction terms $\Phi_n(X)$ decay exponentially in the diameter of $X$, so the sum of all interaction terms linking sites $x,y\in I$ will decay exponentially in the distance $d(\{x,y\},I^c)$. This suggests that we take $\Gamma$ to consist of the intervals $I$ with a collar of length $\ell$ removed from both sides. The interaction terms linking sites $x,y\in \Gamma$ will then decay at least as fast as $e^{-\eta \ell}$. Taking $\ell$ to be a fraction of $L$ leads to power law decay in $n$ of the interaction strength. The following Lemma makes this precise.

\begin{lemma}\label{fnormlemma}
Fix $n\in\mathbb{N}$ and consider the time dependent random interaction $\Phi_n$ given by Eq. \eqref{eq:random_interaction}. Let $\theta\in (0,1)$ be arbitrary. Consider an event $E\subset\Omega_1$ with the following two properties:
\begin{enumerate}
\item[(i)] $(\delta_1,...,\delta_{n-1})$ is fixed on $E$
\item[(ii)]  There are two disjoint intervals $I_j=[a_j,b_j]$, $j=1,2,$ with $|I_j|\geq \theta \log_{1/p}(n)$ such that $\delta_x\big|_E=0$ for each $x\in I_1\cup I_2$.
\end{enumerate}
For $\sigma\in [0,1/2)$, let $\ell=\lfloor \sigma \theta \log_{1/p}(n)\rfloor$ and define the collared intervals $\tilde{I}_j=[a_j+\ell,b_j-\ell]$. Then for any $x,y\in \tilde{I}_1\cup \tilde{I}_2$,
\begin{equation}
\mathbbm{1}_E(\omega_1) \sum_{\substack{X\subseteq[0,n]:\\x,y\in X}}\|\Phi_n(\omega,t)(X)\| \leq B_{E;x,y}(\omega_0)(1+|t|^\beta),
\end{equation}
where there is a constant $\tilde{C}$, depending only on $\eta$, such that $B_{E;x,y}$ satisfies,
\begin{equation}
\E B_{E;x,y} \leq \tilde{C} n^{-\frac{\lambda \eta \sigma \theta }{\log(1/p)}}e^{-(1-\lambda){\eta}\frac{|x-y|}{2}}
\end{equation}
for any $\lambda\in(0,1)$ .
\end{lemma}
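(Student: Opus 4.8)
The plan is to expand the sum over $X$ on the left-hand side in terms of the pairs $(z,m)$ appearing in the definition \eqref{eq:random_interaction} of $\Phi_n$, discard the pairs that cannot contribute on the event $E$, bound the survivors by Proposition \ref{interactiondecay}, and then estimate the resulting sum by a careful count. Concretely, from \eqref{eq:random_interaction} and the triangle inequality,
\[
\mathbbm{1}_E\sum_{\substack{X\subseteq[0,n]:\\ x,y\in X}}\|\Phi_n(\omega,t)(X)\|
\ \le\ \mathbbm{1}_E\sum_{\substack{(z,m):\\ x,y\in\Lambda_{n;z}(m)}}\delta_z(\omega_1)\,\|\psi^{(m)}_{n;z}(\omega_0,t)\|,
\]
and on $E$ the relevant $\delta_z$ are pinned to values in $\{0,1\}$, with $\delta_z=0$ for $z\in I_1\cup I_2$; bounding $\delta_z\le1$ otherwise, only pairs with $z\notin I_1\cup I_2$ matter. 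I would then set $B_{E;x,y}(\omega_0):=\big(\sup_w\|\psi_w\|\big)\sum C^{(m)}_{n;z}(\omega_0)$, the sum running over surviving pairs $(z,m)$; this is well defined on $E$ since $I_1,I_2$ are deterministic there. Applying Proposition \ref{interactiondecay} termwise — once $m\ge1$ has been checked for all survivors, see below — gives the pointwise bound of the lemma with this $B_{E;x,y}$.

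The geometric input is as follows. If $x\in\tilde I_j=[a_j+\ell,b_j-\ell]$ and $z\notin I_1\cup I_2$, so in particular $z\notin[a_j,b_j]$, a one-line check gives $d(x,\{z,z+1\})\ge\ell$; since $x\in\Lambda_{n;z}(m)$ forces $d(x,\{z,z+1\})\le m$, every surviving pair has $m\ge\ell$, hence $m\ge1$ (in the boundary case $\ell=0$ the $m=0$ terms vanish, because $\Lambda_{n;z}(0)=\{z,z+1\}$ cannot contain two distinct points of $\tilde I_1\cup\tilde I_2$ while $z\notin I_1\cup I_2$; we take $x\ne y$, the case $x=y$ being easier). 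Since also $\Lambda_{n;z}(m)\subseteq[z-m,z+m+1]$, an interval of diameter $2m+1$ that must contain $x$ and $y$, every survivor also satisfies $m\ge(|x-y|-1)/2$. Taking expectations and using $\E C^{(m)}_{n;z}\le2\chi(2)e^{-\eta m}$, the problem reduces to bounding $\sum_m e^{-\eta m}N_m$, where $N_m=\#\{z\notin I_1\cup I_2:\ x,y\in\Lambda_{n;z}(m)\}$.

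To finish I would split according to whether $\ell\ge(|x-y|-1)/2$ or not. In the first regime, writing $z\notin[a_j,b_j]$ as $z\le a_j-1$ or $z\ge b_j+1$ and combining $x,y\in[a_j+\ell,b_j-\ell]$ with $x,y\in[z-m,z+m+1]$ yields $N_m\le2(m-\ell)+1$ for $m\ge\ell$ and $N_m=0$ otherwise (if $x,y$ lie in different $\tilde I_j$ then $|x-y|\ge2\ell+1$, so being in this regime forces $|x-y|=2\ell+1$, and the crude count below already gives the same bound), whence $\sum_m e^{-\eta m}N_m\le e^{-\eta\ell}\sum_{j\ge0}(2j+1)e^{-\eta j}=\kappa_\eta e^{-\eta\ell}$. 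In the second regime the crude bound $N_m\le\max(0,2m-|x-y|+2)$ — supported on $m\ge M_1:=\lceil(|x-y|-1)/2\rceil$ with $2M_1-|x-y|+2\le2$ — gives $\sum_m e^{-\eta m}N_m\le\kappa'_\eta e^{-\eta M_1}\le\kappa'_\eta e^{-\eta(|x-y|-1)/2}$. In both cases I would write $e^{-\eta\ell}$ (resp.\ $e^{-\eta(|x-y|-1)/2}$) as $e^{-\lambda\eta\ell}e^{-(1-\lambda)\eta\ell}$ (resp.\ with $\ell$ replaced by $(|x-y|-1)/2$): the bound $\ell>\sigma\theta\log_{1/p}n-1$, which also bounds $(|x-y|-1)/2$ from below in the second regime, turns the first factor into $e^{\eta}n^{-\lambda\eta\sigma\theta/\log(1/p)}$, while $\ell\ge(|x-y|-1)/2$ in the first regime (and the trivial identity $(|x-y|-1)/2=|x-y|/2-1/2$ in the second) turns the second factor into $e^{\eta/2}e^{-(1-\lambda)\eta|x-y|/2}$. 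Absorbing $\sup_w\|\psi_w\|$, $\chi(2)$, $\kappa_\eta$, $\kappa'_\eta$ and the numerical exponentials into one constant $\tilde C$ — depending only on $\eta$ and the fixed model data — completes the argument.

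The step I expect to be the genuine obstacle is the count $N_m\le2(m-\ell)+1$ in the first regime, i.e.\ when $x,y$ sit deep inside one long block on which the $\delta$'s vanish. Using there the naive bound $N_m\le2m+2$ valid for all $z$ would leave a spurious factor $\ell\sim\log_{1/p}n$ after summation, which would destroy the uniformity of $\tilde C$; it is the cancellation $\sum_{m\ge\ell}(2(m-\ell)+1)e^{-\eta m}=\kappa_\eta e^{-\eta\ell}$, which hinges on the constraint $m\ge\ell$ forced by the collar of width $\ell$, that makes the bound go through with a constant independent of $n$, $x$, $y$, $\lambda$, $\sigma$, $\theta$ and $p$.
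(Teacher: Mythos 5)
Your proof is correct and takes essentially the same route as the paper's: both start from the pointwise bound $\sum_{x,y\in X}\|\Phi_n(t)(X)\|\le\sum_{z}\sum_{m\ge\max\{|z-x|,|z-y+1|\}}\delta_z\|\psi^{(m)}_{n;z}(t)\|$, both exploit the observation that the collar forces $m\ge\ell$ (and always $m\gtrsim|x-y|/2$), and both then do an elementary geometric-series estimate before splitting $e^{-\eta(\cdot)}$ into $e^{-\lambda\eta(\cdot)}e^{-(1-\lambda)\eta(\cdot)}$. The only difference is bookkeeping: the paper sums directly over $z$, splitting into three spatial regions (to the left of both collared intervals, between them, and to the right) and evaluating the resulting double geometric sums; you instead swap the order of summation, bound the multiplicity $N_m=\#\{z\notin I_1\cup I_2:\,x,y\in\Lambda_{n;z}(m)\}$, and distinguish the regimes $\ell\ge(|x-y|-1)/2$ versus its complement. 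Your regime split is exactly parallel to the paper's distinction between $x,y$ in the same collared interval ($s=r$, handled by the exterior sums alone, yielding $e^{-\eta(|x-y|+\ell)}$) and $x,y$ in different ones (which adds the middle sum and yields $e^{\eta/2}e^{-\eta|x-y|/2}$). The identification of the delicate point — that the naive count $N_m\lesssim m$ in the same-block case would cost a spurious $\ell\sim\log n$ factor, and that the refined count $N_m\le 2(m-\ell)+1$ supported on $m\ge\ell$ is what keeps $\tilde C$ uniform — is the genuine content, and you have it right.
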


\begin{proof} First note that for any points $x<y$ in $[0,n]$ the following inequality holds,
\begin{equation}\label{prefnormbound}
\sum_{\substack{X\subseteq[0,n]:\\x,y\in X}}\|\Phi_n(\omega,t)(X)\|\leq \sum_{z=0}^{n-1}\sum_{\substack{m\geq\\ \max\{|z-x|,|z-y+1|\}}} \delta_z(\omega_1) \|\psi_{n;z}^{(m)}(\omega_0,t)\|.
\end{equation}
This follows from the fact that $\max\{|z-x|,|z-y+1|\}$ is the smallest integer $m$ such that $x,y\in\Lambda_{n;z}(m)$. Without loss of generality assume $a_1<a_2$, and take $x\leq y\in\tilde{I}_2\cup\tilde{I}_2$. Suppose $x\in \tilde{I}_s$, $y\in\tilde{I}_r$ with $s\leq r$. On the event $E$, $\delta_z(\omega_1)=0$ if $z\in I_1\cup I_2$, so we have the bound
\begin{align}
&\mathbbm{1}_E(\omega_1)\sum_{z=0}^{n-1}\sum_{\substack{m\geq\\ \max\{|z-x|,|z-y+1|\}}} \delta_z(\omega_1)\|\psi_{n;z}^{(m)}(\omega_0,t)\|\nonumber\\
&\leq \sum_{z\notin I_1\cup I_2} \sum_{\substack{m\geq\\ \max\{|z-x|,|z-y+1|\}}}\|\psi_{n;z}^{(m)}(\omega_0,t)\|\nonumber\\
&\leq (\sup_{x}\|\psi_{x,x+1}\|)\sum_{z\notin I_r\cup I_s} \sum_{\substack{m\geq\\ \max\{|z-x|,|z-y+1|\}}}C_{n;z}^{(m)}(\omega_0)(1+|t|^\beta)\equiv B_{E;x,y}(\omega_0)(1+|t|^\beta).
\end{align}
By Proposition \ref{interactiondecay},
\begin{align}\label{Bexpectation}
\E B_{E;x,y} \leq 2(\sup_{x}\|\psi_{x}\|)\chi(2)\sum_{z\notin I_r\cup I_s} \sum_{\substack{m\geq\\ \max\{|z-x|,|z-y+1|\}}}e^{-\eta m}
\end{align}
We have,
\begin{align}\label{eq:l32sum}
\sum_{z\notin I_r\cup I_s} \sum_{\substack{m\geq\\ \max\{|z-x|,|z-y+1|\}}}e^{-\eta m}
&=\left(\sum_{z=0}^{a_r-1}+\sum_{z=b_r+1}^{a_s-1}+\sum_{z=b_s+1}^{n-1}\right)\sum_{\substack{m\geq\\ \max\{|z-x|,|z-y+1|\}}}e^{-\eta m}
\end{align}
We first estimate,
\begin{align}
\left(\sum_{z=0}^{a_r-1}+\sum_{z=b_s+1}^{n-1}\right)\sum_{\substack{m\geq\\ \max\{|z-x|,|z-y+1|\}}}e^{-\eta m}&\leq \left[ \sum_{z=0}^{a_r-1} \sum_{m= y-z-1}^\infty e^{-\eta m}+\sum_{z=b_s+1}^{n-1}\sum_{m= z-x}^\infty e^{-\eta m}\right]\nonumber\\
&\leq \sum_{k=y-a_r}^\infty \sum_{m=k}^\infty e^{-\eta m}+\sum_{k=b_s-x}^\infty\sum_{m=k}^\infty e^{-\eta m}\nonumber\\
&= \frac{1}{(1-e^{-\eta})^2} [e^{-{\eta}(y-a_r)}+e^{-{\eta}(b_s-x)}]\nonumber \\
&\leq  \frac{1}{(1-e^{-\eta})^2}e^{-{\eta}[(y-x)+\ell]},
\end{align}
where we used that $b_s-y,x-a_r\geq \ell$ in the last line. The remaining sum in Eq. \eqref{eq:l32sum} vanishes when $r=s$. If $r<s$ then,
\begin{align}
\sum_{z=b_r+1}^{a_s-1} \sum_{\substack{m\geq\\ \max\{z-x,y-z-1\}}}e^{-\eta m}&\leq \sum_{z=b_r+1}^{\lceil \frac{y+x-1}{2}\rceil -1}\sum_{m=y-z-1}^\infty e^{-\eta m}+\sum_{z=\lceil\frac{y+x-1}{2}\rceil}^{a_s-1}\sum_{m=z-x}^\infty e^{-\eta m}\nonumber\\
&\leq \sum_{k=y-\lceil \frac{y+x-1}{2}\rceil }^\infty\sum_{m=k}^\infty e^{-\eta m}+ \sum_{k=\lceil \frac{y+x-1}{2}\rceil -x}^\infty \sum_{m=k}^\infty e^{-\eta m}\nonumber\\
&\leq \frac{1}{(1-e^{-\eta})^2}[e^{-{\eta}(y-\lceil \frac{y+x-1}{2}\rceil)}+e^{-{\eta}(\lceil \frac{y+x-1}{2}\rceil-x)}]\nonumber\\
&\leq \frac{1}{(1-e^{-\eta})^2}e^{\frac{\eta}{2}} e^{-{\eta}(\frac{y-x}{2})}
\end{align}
If $r<s$, then $|x-y|\geq 2\ell$ and
\begin{align}
e^{-{\eta} (\frac{y-x}{2})}=e^{-{\eta}\lambda ( \frac{y-x}{2})}e^{-{\eta}(1-\lambda)(\frac{y-x}{2})}\leq e^{-\lambda{\eta}\ell}e^{-(1-\lambda){\eta}(\frac{y-x}{2})}
\end{align}
Therefore,
\begin{align}
\sum_{z\notin I_1 \cup I_2} \sum_{\substack{m\geq\\ \max\{|z-x|,|z-y+1|\}}}e^{-\eta m}\leq \frac{1}{(1-e^{-\eta})^2}\left(e^{-\eta(|x-y|+\ell)}+(1-\delta_{s,r})e^{\frac{\eta}{2}}e^{-\lambda\eta\ell}e^{-(1-\lambda)\eta\frac{|x-y|}{2}}\right),
\end{align}
which together with Eq. \eqref{Bexpectation} proves the lemma.
\end{proof}

We now use Lemma \ref{fnormlemma} to prove that a Lieb-Robinson bound holds for the dynamics $\tau_t^{H_n^I}$ on an event contained in $E$ which has probability nearly that of $E$ for large $n$.

\begin{lemma}\label{lemma:intLRbound}
Assume the hypotheses and notation of Lemma \ref{fnormlemma}, with the additional assumption that $|I_j|\leq \frac{3}{2}\log_{1/p}(n)$ for $j=1,2$. Then for any $\nu\in(0,1)$ there is an event $W_E\subset\Omega_0$ such that for any $\xi\in(0,1)$ there are positive constants $c_0$ and $c_1$, which depend only on $\nu,\xi,\lambda$ and $\eta$, such that

\begin{equation}\label{eq:conditionalLRbound}
\mathbbm{1}_{W_E}(\omega_0)\mathbbm{1}_E(\omega_1)\sup_{\substack{A\in\mathcal{A}_{[0,a_1]}^1\\ B\in\mathcal{A}_n^1}}\|[\tau_t^{H^I_n(\omega)}(A),B]\|\leq c_0(e^{c_1 n^{-\frac{\nu\lambda \eta \sigma \theta }{\log(1/p)}}(|t|+|t|^{\beta+1})}-1)e^{-\xi\frac{\nu(1-\lambda)\eta}{2}d(I_1,I_2)}.
\end{equation}
Furthermore, the event $W_E$ satisfies,
\begin{align}
\Pr(W_E)\geq 1- \tilde{C}' n^{-\frac{(1-\nu) \lambda \eta \sigma \theta}{\log(1/p)}} \log_{1/p}(n)
\end{align}
where
\be
\tilde{C}'=\frac{3\tilde{C}}{1-e^{-\frac{(1-\nu)(1-\lambda)\eta}{2}}}
\ee
\end{lemma}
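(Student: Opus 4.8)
The plan is to derive \eqref{eq:conditionalLRbound} by feeding the $F$-norm estimate of Lemma~\ref{fnormlemma} into the effective-lattice Lieb--Robinson bound described in the paragraphs preceding that lemma (and proved in the appendix), applied with the distinguished subset $\Gamma=\tilde I_1\cup\tilde I_2$ and with the exponential $F$-function $F_{\nu,\lambda}(r)=e^{-\frac{\nu(1-\lambda)\eta}{2}r}$. The rate $\tfrac{\nu(1-\lambda)\eta}{2}$ is chosen strictly below the rate $\tfrac{(1-\lambda)\eta}{2}$ appearing in Lemma~\ref{fnormlemma}, so that $F_{\nu,\lambda}(|x-y|)^{-1}\E B_{E;x,y}$ still decays exponentially in $|x-y|$; this surplus is what drives the Markov/union bound below. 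Two features of the appendix bound are used. Its time-growth exponent is $\int_0^{|t|}\sup_{x,y\in\Gamma}F_{\nu,\lambda}(|x-y|)^{-1}\sum_{X\ni x,y}\|\Phi_n(\omega,s)(X)\|\,ds$, which involves only interaction terms linking two points of $\Gamma$ — precisely the quantities controlled by Lemma~\ref{fnormlemma} — while the (here unboundedly large) on-site term produced by identifying the middle block is inert, as in \cite{nachtergaele:2019}. Its spatial decay is $F_{\nu,\lambda}$ evaluated at an effective distance between the supports of $A$ and $B$ that, because $A\in\A^1_{[0,a_1]}$ and $B\in\A^1_n$ lie on opposite sides of both collared intervals, is at least $d(I_1,I_2)$; after absorbing the polynomial-in-distance prefactor of the Lieb--Robinson kernel into a slightly smaller exponent $\xi\cdot\tfrac{\nu(1-\lambda)\eta}{2}$ (this is where $\xi<1$ enters), the spatial factor becomes $e^{-\xi\frac{\nu(1-\lambda)\eta}{2}d(I_1,I_2)}$.

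Next I would fix the good event so that, for $\omega_1\in E$, the random time-growth exponent becomes deterministically small. By Lemma~\ref{fnormlemma} it suffices to make $\sup_{x,y\in\Gamma}F_{\nu,\lambda}(|x-y|)^{-1}B_{E;x,y}(\omega_0)$ small, so I would take
\[
W_E=\bigcap_{x,y\in\Gamma}\Big\{\,F_{\nu,\lambda}(|x-y|)^{-1}\,B_{E;x,y}(\omega_0)\le n^{-\frac{\nu\lambda\eta\sigma\theta}{\log(1/p)}}\,\Big\}.
\]
For each pair, Markov's inequality and the bound $\E B_{E;x,y}\le\tilde C n^{-\frac{\lambda\eta\sigma\theta}{\log(1/p)}}e^{-\frac{(1-\lambda)\eta}{2}|x-y|}$ from Lemma~\ref{fnormlemma} give
\[
\Pr_0\!\Big(F_{\nu,\lambda}(|x-y|)^{-1}B_{E;x,y}>n^{-\frac{\nu\lambda\eta\sigma\theta}{\log(1/p)}}\Big)\le\tilde C\,n^{-\frac{(1-\nu)\lambda\eta\sigma\theta}{\log(1/p)}}\,e^{-\frac{(1-\nu)(1-\lambda)\eta}{2}|x-y|}.
\]
Summing the geometric series in $|x-y|$ and then over the at most $|\Gamma|\le|I_1|+|I_2|\le 3\log_{1/p}(n)$ values of $x$ yields $\Pr_0(W_E^c)\le\tilde C' n^{-\frac{(1-\nu)\lambda\eta\sigma\theta}{\log(1/p)}}\log_{1/p}(n)$ with $\tilde C'=\frac{3\tilde C}{1-e^{-(1-\nu)(1-\lambda)\eta/2}}$; the hypothesis $|I_j|\le\frac32\log_{1/p}(n)$ is used only here, to control $|\Gamma|$.

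Finally I would assemble the estimate: for $\omega_0\in W_E$ and $\omega_1\in E$, the time-growth exponent of the appendix bound is at most $n^{-\frac{\nu\lambda\eta\sigma\theta}{\log(1/p)}}\int_0^{|t|}(1+|s|^\beta)\,ds=n^{-\frac{\nu\lambda\eta\sigma\theta}{\log(1/p)}}\big(|t|+\tfrac{|t|^{\beta+1}}{\beta+1}\big)$, so its time factor is dominated by $e^{c_1 n^{-\frac{\nu\lambda\eta\sigma\theta}{\log(1/p)}}(|t|+|t|^{\beta+1})}-1$; multiplying by the spatial factor $e^{-\xi\frac{\nu(1-\lambda)\eta}{2}d(I_1,I_2)}$ and the kernel prefactor $c_0$ gives \eqref{eq:conditionalLRbound}, with $c_0,c_1$ depending only on $\nu,\xi,\lambda,\eta$ through the geometry of $F_{\nu,\lambda}$ and its convolution constant.

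I expect the main obstacle to be the very first step: invoking the effective-lattice Lieb--Robinson bound from the appendix in exactly the right form — one whose velocity is governed solely by the $F$-norm of the $\Gamma$-linking part of $\Phi_n$, is insensitive both to the local Hilbert space dimensions and to the large on-site term on the identified middle block, and whose spatial decay is measured in a metric on the effective lattice that does not shrink distances, so that the separation of $A$ and $B$ stays $\ge d(I_1,I_2)$. The associated technical point is checking that $\Phi_n$ genuinely meets the hypotheses of that bound with $F_{\nu,\lambda}$: every bond touching $\Gamma$ reaches at least the collar length $\ell\sim\sigma\theta\log_{1/p}(n)$ into $\Gamma$, hence by Proposition~\ref{interactiondecay} is suppressed by $e^{-\eta\ell}$, which dominates $F_{\nu,\lambda}$ over the (possibly short) corresponding effective distance because $\tfrac{\nu(1-\lambda)\eta}{2}\le\eta$ — and it is exactly here that $\sigma<\frac12$ and $\nu,\lambda<1$ are used. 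Once this is in place, the remaining steps are the bookkeeping above.
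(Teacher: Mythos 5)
Your proof takes essentially the same route as the paper's: you define the same good event $W_E$ via Markov's inequality and a union bound over pairs in $\tilde I_1\cup\tilde I_2$ (using $|I_j|\le\tfrac32\log_{1/p}(n)$ to control the number of pairs), then feed the resulting deterministic $F$-norm bound into Proposition~\ref{LRbound} with $\mathcal{I}=\{\tilde I_1,\tilde I_2\}$, sacrificing a $(1-\xi)$-fraction of the exponential rate to absorb the non-exponential part of the $F$-function. The one slight imprecision is treating $e^{-\frac{\nu(1-\lambda)\eta}{2}r}$ as if it were itself an $F$-function; it is not (the convolution constant is infinite), and the paper instead fixes a genuine $F$-function $F$ and works with $e^{-\xi\mu r}F(r)$, which is what the $\xi<1$ is really for — but your remark about absorbing a prefactor into the $\xi$-reduced exponent captures the right idea.
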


\begin{proof}
For a fixed pair $x,y$ in $\tilde{I}_1\cup \tilde{I}_2$, by Markov's inequality and Lemma \ref{fnormlemma},
\begin{equation}
\Pr(B_{E;x,y}\leq n^{-\frac{\nu \lambda \eta \sigma \theta}{\log(1/p)}}e^{-\nu(1-\lambda)\eta \frac{|y-x|}{2}})\geq 1-\tilde{C} n^{-\frac{(1-\nu) \lambda \eta \sigma \theta}{\log(1/p)}}e^{-(1-\nu)(1-\lambda)\eta \frac{|x-y|}{2}}
\end{equation}
Let 
\begin{equation}
W_E=\{ B_{E;x,y}\leq n^{-\frac{\nu \lambda \eta \sigma \theta}{\log(1/p)}}e^{-\nu(1-\lambda)\eta \frac{|y-x|}{2}} \text{ for all }x,y\in \tilde{I}_1\cup\tilde{I}_2\}
\end{equation}
It follows that,
\begin{align}
\Pr( W_E)\nonumber&\geq 1-\tilde{C}n^{-\frac{(1-\nu) \lambda \eta \sigma \theta}{\log(1/p)}}\sum_{\substack{x\leq y:\\ x,y\in\tilde{I}_1\cup \tilde{I}_2}}e^{-(1-\nu)(1-\lambda)\eta \frac{|x-y|}{2}}\nonumber\\
&\geq 1- \tilde{C}n^{-\frac{(1-\nu) \lambda \eta \sigma \theta}{\log(1/p)}} \sum_{x\in \tilde{I}_1\cup \tilde{I}_2}\sum_{y=x}^\infty e^{-(1-\nu)(1-\lambda)\eta \frac{|x-y|}{2}}\nonumber\\
&\geq 1-\frac{3\tilde{C}}{1-e^{-\frac{(1-\nu)(1-\lambda)\eta}{2}}} n^{-\frac{(1-\nu) \lambda \eta \sigma \theta}{\log(1/p)}} \log_{1/p}(n)
\end{align}
Let $F$ be any $F$-function on $\Ir$ such that for any $c>0$,
\begin{equation}
\sup_{x\in \Ir} \frac{e^{-c|x|}}{F(|x|)}<\infty.
\end{equation}
Then by Lemma \ref{fnormlemma} and the definition of $W_E$ we have that,
\begin{align}
&\mathbbm{1}_{E\cap W_E}(\omega) \sup_{x,y\in \tilde{I}_1\cup \tilde{I}_2} \frac{1}{e^{-\xi \frac{\nu (1-\lambda) \eta}{2}|x-y|}F(|x-y|)}\sum_{\substack{X\subseteq[0,n]:\\x,y\in X}}\|\Phi_n(\omega,t)(X)\|\nonumber\\
&\leq n^{-\frac{\nu \lambda \eta \sigma \theta}{\log(1/p)}}\sup_{x,y\in \tilde{I}_1\cup\tilde{I}_2} \frac{e^{-\nu(1-\xi)(1-\lambda)\eta \frac{|y-x|}{2}}}{F(|x-y|)}\nonumber\\
&\leq n^{-\frac{\nu\lambda\eta \sigma\theta}{\log(1/p)}}\sup_{x\in \Ir} \frac{e^{-\frac{\nu(1-\xi)(1-\lambda)\eta}{2}|x|}}{F(|x|)} 
\end{align}
The result now follows from Proposition \ref{LRbound} in the appendix, using the collection $\mathcal{I}=\{\tilde{I}_1,\tilde{I}_2\}$.
\end{proof}

From Lemma \ref{lemma:intLRbound}, we see that the best Lieb-Robinson bound will be obtained on events $E$ where the intervals $I_1$ and $I_2$ are as far apart as possible. This in fact occurs with high probability: Let $\theta\in(0,1)$ and suppose $F_n$ is the event that there are two intervals of consecutive 0's of length at least $\theta\log_{1/p}(n)$ in $n$ i.i.d. Bernoulli trials, such that the distance $r_n$ between the intervals satisfies $\lim r_n/n=1$. Then the probability of $F_n$ tends to 1 as $n$ tends to infinity. This can be seen by noting that if $\theta'\in (\theta,1)$, then the longest run $R_n$ of zeros in $\lfloor n^{\theta'}\rfloor $ i.i.d. Bernoulli trials has the property that
\be
\frac{R_n}{\theta' \log_{1/p}(n)}\to 1
\ee
in probability. Therefore, with a probability tending to 1, there is an interval of length at least $\theta\log_{1/p}(n)$ in both the first and last $\lfloor n^{\theta'}\rfloor$ trials in $n$ Bernoulli trials. The distance between these two intervals is at least $n-2n^{\theta'}$.\newline

\begin{proof}[Proof of Theorem \ref{thm:mainthm}]
We will prove the result for $\beta>0$. The case $\beta=0$ requires only minor modifications. We will show that under the hypotheses of the theorem there is a sequence of events $Q_n$ with $\lim_{n\to \infty}\Pr(Q_n)=1$, and a deterministic sequence $x_n$ satisfying $\lim_{n\to\infty}n^\gamma/x_n=0$ such that
\be
\mathbbm{1}_{Q_n} t_n(e^{-\alpha \eta n})\geq x_n.
\ee
From this it easily follows that $n^\gamma/t_n(e^{-\alpha \eta n})\to0$ in probability. \newline

Let $\kappa\in (\alpha,1)$. Our starting point is Eq. \eqref{eq:splitdynamics}, with $d_n=\lfloor \kappa n \rfloor$. Consider the event $F_n=\{ 2\chi(1)C_{d_n}\leq n^{-(\gamma\beta+1)} e^{-\alpha \eta n}\}$. By Markov's inequality,
\be\label{eq:locdynbound}
\Pr(F_n)\geq 1-2\chi(1)n^{\gamma\beta+1}e^{-(\kappa-\alpha)\eta n}.
\ee
It follows from Eq. \eqref{eq:splitdynamics} that,
\begin{equation}\label{eq:splitLRbound}
\mathbbm{1}_{F_n}(\omega_0)\sup_{\substack{A\in\mathcal{A}_0^1\\ B\in\mathcal{A}_n^1} }\| [\tau_t^{H_n(\omega)}(A),B]\| \leq (1+|t|^\beta)n^{-\gamma\beta+1}e^{-\alpha\eta n}+\sup_{\substack{A\in\mathcal{A}_{[0,d_n]}^1\\ B\in\mathcal{A}_n^1} }\|\tau_t^{H_n^I(\omega)}(A),B]\|.
\end{equation}

Choose $\theta\in (0,1)$, and let $G_n\subset \Omega_1$ denote a sequence of events in which there are two runs of zeros in the list $(\delta_{d_n},...,\delta_{n-1})$ of length at least $\theta \log_{1/p}(n)$ and no more than $\tfrac{3}{2}\log_{1/p}(n)$, and such that if $r_n$ denotes the distance between the two runs, $\lim_{n\to\infty}r_n/n\to (1-\kappa)$. We have observed that such a sequence can be chosen with $\lim_{n\to\infty}\Pr(G_n)=1$. Write,
\begin{equation}
G_n=\bigsqcup_{E\in\mathcal{F}_n}E,
\end{equation}
where $\mathcal{F}_n$ is the set of events $E\subset\Omega_1$ on which $(\delta_{d_n},\delta_{d_n+1},...\delta_{n-1})$ is fixed. Consider an event $E\in\mathcal{F}_n$. By Lemma \ref{lemma:intLRbound} we have that,
\begin{equation}\label{eq:rintLRbound}
\mathbbm{1}_{W_E}(\omega_0)\mathbbm{1}_E(\omega_1)\sup_{\substack{A\in\mathcal{A}_{[0,d_n]}^1\\ B\in\mathcal{A}_n^1}}\|[\tau_t^{H^I_n(\omega)}(A),B]\|\leq c_0(e^{c_1 n^{-\frac{\nu\lambda \eta \sigma \theta }{\log(1/p)}}(|t|+|t|^{\beta+1})}-1)e^{-\xi \frac{\nu (1-\lambda)\eta}{2}r_n}.
\end{equation}

Note that Eq. \eqref{eq:locdynbound} and Lemma \ref{lemma:intLRbound} imply that for each $E\in \mathcal{F}_n$,
\be\label{eq:problowerbound}
\Pr(W_E \cap F_n)\geq 1-2\chi(1)n^{\gamma\beta+1}e^{-(\kappa-\alpha)\eta n)}-\tilde{C}'n^{-\frac{(1-\nu) \lambda \eta \sigma \theta}{\log(1/p)}} \log_{1/p}(n)\equiv X_n.
\ee
Clearly $X_n\to 1$ as $n\to\infty$. Now define $Q_n=\sqcup_{E\in\mathcal{F}_n} E\cap W_E\cap F_n$. By independence and Eq. \eqref{eq:problowerbound},
\be\label{eq:highprob}
\Pr(Q_n)=\sum_{E\in\mathcal{F}_n}\Pr(E)\Pr(W_E\cap F_n)\geq X_n \sum_{E\in\mathcal{F}_n}\Pr(E)=X_n\Pr(G_n),
\ee
which shows that $\Pr(Q_n)\to 1$ as $n\to\infty$. \newline

We now show that the transmission time has a deterministic lower bound on the event $Q_n$. Eqs. \eqref{eq:splitLRbound} and  \eqref{eq:rintLRbound} give the bound,
\begin{multline}\label{eq:highprobLRbound}
\mathbbm{1}_{Q_n}(\omega)\sup_{\substack{A\in\mathcal{A}_0^1\\ B\in\mathcal{A}_n^1} }\| [\tau_t^{H_n(\omega)}(A),B]\| \leq (1+|t|^\beta)n^{-(\gamma\beta+1)}e^{-\alpha\eta n}\\
+c_0(e^{c_1 n^{-\frac{\nu\lambda \eta \sigma \theta }{\log(1/p)}}(|t|+|t|^{\beta+1})}-1)e^{-\xi \frac{\nu (1-\lambda)\eta}{2}r_n}
\end{multline}
It follows that
\be
\mathbbm{1}_{Q_n}t_n(e^{-\alpha \eta n})\geq \min\{ (\tfrac{1}{2}n^{\gamma \beta+1}-1)^{\frac{1}{\beta}}, Y_n\}\equiv x_n,
\ee
where
\be
Y_n=\left[ \frac{n^{\frac{\lambda\eta\sigma\theta}{\log(1/p)}}}{2 c_1}\log \left( 1+\frac{1}{2c_0}e^{(\xi \frac{\nu(1-\lambda)}{2}\frac{r_n}{n}-\alpha)\eta n} \right) \right]^{\frac{1}{\beta+1}}.
\ee
Since $\lim_{n\to\infty}r_n/n= (1-\kappa)$, we have that
\be\label{eq:yassymp}
\frac{n^{(\frac{\nu\lambda\eta\sigma\theta}{\log(1/p)}+1)/(\beta+1)}}{Y_n}
\ee
converges to a positive constant, provided 
\be\label{eq:constraint1}
\xi\frac{\nu(1-\lambda)(1-\kappa)}{2}>\alpha
\ee
One can check that Eq. \eqref{eq:constraint1} can be satisfied only if $\alpha<1/3$. In this case, $\nu$ and $\xi$ close to 1 can be chosen so Eq. \eqref{eq:constraint1} is satisfied only if
\be\label{eq:constraint2}
\kappa\in (\alpha,1-2\alpha) \text{ and } \lambda\in (0, 1-\frac{2\alpha}{1-\kappa}).
\ee
If Eq. \eqref{eq:constraint1} is satisfied, then Eq. \eqref{eq:yassymp} implies that
\be
\lim_{n\to\infty}\frac{n^\gamma}{Y_n}=0
\ee
provided 
\be\label{eq:constraint3}
\eta>\frac{[\gamma(\beta+1)-1]}{\nu\lambda \sigma\theta}\log(1/p).
\ee
We conclude that if Eq. \eqref{eq:constraint3} is satisfied, then $n^\gamma/x_n\to 0$.\newline

We can choose parameters so Eq. \eqref{eq:constraint3} is satisfied if $\eta$ is larger than
\be
\inf \frac{[\gamma(\beta+1)-1]}{\nu\lambda \sigma\theta}\log(1/p)= \frac{2[\gamma(\beta+1)-1]}{1-\frac{2\alpha}{1-\alpha}}\log(1/p),
\ee
where the infimum is taken over parameter values satisfying Eqs. \eqref{eq:constraint2} and \eqref{eq:constraint1}.
\end{proof}
The following general proposition is needed to adapt the proof of Theorem \ref{thm:mainthm} to the thermodynamic limit.

\begin{proposition}\label{prop:intpic}
Let $\Phi_1,\Phi_2:\mathcal{P}_0(\Ir)\to\A_\Ir^\text{loc}$ are two $F$-norm bounded interactions with respect to some $F$-function. Let $H^j_\Lambda=\sum_{X\subseteq\Lambda}\Phi_j(X)$ denote the corresponding local Hamiltonians for each finite volume $\Lambda\subset\Ir$. Let $\tau_t$ denote the thermodynamic limit of the model $\Phi_1+\Phi_2$. Then the following limit holds,
\be
\tau_t=\lim_{\Lambda_2\uparrow\Ir}\lim_{\Lambda_1\uparrow\Ir} \tau_t^{H_{\Lambda_1}^1+H_{\Lambda_2}^2}
\ee
where the limits are taken along any increasing, exhaustive sequences of finite subsets of $\Ir$. For each finite $\Lambda\subset\Ir$, $\lim_{\Lambda_1\uparrow\Ir}\tau_t^{H^1_{\Lambda_1}+H_\Lambda^2}$ can be expressed in terms of the interaction picture:
\be\label{eq:intdecomp}
\lim_{\Lambda_1\uparrow\Ir} \tau_t^{H_{\Lambda_1}^1+H_\Lambda^2}=\tau_t^{\Lambda,I}\circ\tau_t^0,
\ee
where $\tau_t^0$ is the thermodynamic limit of the model $\Phi_1$, and $\tau_t^{\Lambda,I}$ is the dynamics generated by the time-dependent, quasi-local Hamiltonian $\tau_t^0(H_\Lambda^2)$.
\end{proposition}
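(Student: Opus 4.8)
The plan is to construct the iterated limit in two stages, first $\Lambda_1\uparrow\Ir$ and then $\Lambda_2\uparrow\Ir$, controlling every error term with a Lieb--Robinson bound whose constants do not depend on the finite volumes. As a preliminary, since $\Phi_1$ and $\Phi_2$ are $F$-norm bounded with respect to the same $F$-function, so is $\Phi_1+\Phi_2$, with $F$-norm at most $\|\Phi_1\|_F+\|\Phi_2\|_F$. Hence the standard existence theorem for thermodynamic limits of $F$-norm bounded interactions \cite{nachtergaele:2019} applies both to $\Phi_1$ and to $\Phi_1+\Phi_2$: the limits $\tau_t^0$ and $\tau_t$ exist, the finite-volume approximants converge strongly along \emph{any} exhausting sequence, uniformly for $t$ in compact sets, and the Lieb--Robinson bound (Theorem~3.1 of \cite{nachtergaele:2019}) holds for all the dynamics $\tau_t^{H^1_{\Lambda_1}+H^2_{\Lambda_2}}$ with velocity and prefactor depending only on $\|\Phi_1\|_F+\|\Phi_2\|_F$.

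First I would fix a finite $\Lambda\subset\Ir$ and carry out the limit $\Lambda_1\uparrow\Ir$. The starting point is the exact finite-volume interaction-picture identity
\[ \tau_t^{H^1_{\Lambda_1}+H^2_\Lambda}=\tau_t^{I,\Lambda_1,\Lambda}\circ\tau_t^{H^1_{\Lambda_1}}, \]
in which $\tau_t^{I,\Lambda_1,\Lambda}$ is the dynamics generated by the bounded time-dependent Hamiltonian $t\mapsto\tau_t^{H^1_{\Lambda_1}}(H^2_\Lambda)$, obtained exactly as in \eqref{inthamiltonian}. As $\Lambda_1\uparrow\Ir$: (a) $\tau_t^{H^1_{\Lambda_1}}\to\tau_t^0$ strongly; (b) since $H^2_\Lambda=\sum_{X\subseteq\Lambda}\Phi_2(X)$ is a \emph{finite} sum of local observables, $\sup_{|t|\le T}\|\tau_t^{H^1_{\Lambda_1}}(H^2_\Lambda)-\tau_t^0(H^2_\Lambda)\|\to0$ for every $T$, and $t\mapsto\tau_t^0(H^2_\Lambda)$ is a norm-continuous, uniformly bounded, self-adjoint, quasi-local family in $\A_\Ir$, so it generates a well-defined dynamics $\tau_t^{\Lambda,I}$; (c) a Dyson-series (Duhamel) estimate converts the convergence of the generators in (b) into $\tau_t^{I,\Lambda_1,\Lambda}(B)\to\tau_t^{\Lambda,I}(B)$ in norm, uniformly for $|t|\le T$ and for $B$ in the unit ball; and (d) since the maps $\tau_t^{H^1_{\Lambda_1}}$ are isometric, combining (a) and (c) yields convergence of the composition. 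This proves \eqref{eq:intdecomp}.

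Next I would take $\Lambda_2=\Lambda\uparrow\Ir$ and identify $\lim_{\Lambda\uparrow\Ir}\tau_t^{\Lambda,I}\circ\tau_t^0$ with $\tau_t$. Fix $A\in\A_X^1$. For finite $\Lambda$ and finite $\Lambda_1\supseteq\Lambda$, Duhamel's formula gives
\[ \|\tau_t^{H^1_{\Lambda_1}+H^2_\Lambda}(A)-\tau_t^{H^1_\Lambda+H^2_\Lambda}(A)\|\le\int_0^{|t|}\|[\,H^1_{\Lambda_1}-H^1_\Lambda\,,\,\tau_s^{H^1_{\Lambda_1}+H^2_\Lambda}(A)\,]\|\,ds , \]
and estimating the commutator termwise over the interaction terms $\Phi_1(Z)$ with $Z\subseteq\Lambda_1$, $Z\not\subseteq\Lambda$ (using the volume-independent Lieb--Robinson bound together with the $F$-summability of $\Phi_1$) produces an upper bound $\varepsilon_\Lambda(T)$ that tends to $0$ as $\Lambda\uparrow\Ir$, \emph{uniformly in $\Lambda_1\supseteq\Lambda$} and in $|t|\le T$. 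Letting $\Lambda_1\uparrow\Ir$ and invoking \eqref{eq:intdecomp} yields $\|\tau_t^{\Lambda,I}\circ\tau_t^0(A)-\tau_t^{H^1_\Lambda+H^2_\Lambda}(A)\|\le\varepsilon_\Lambda(T)$. Since $H^1_\Lambda+H^2_\Lambda=\sum_{X\subseteq\Lambda}(\Phi_1+\Phi_2)(X)$, one has $\tau_t^{H^1_\Lambda+H^2_\Lambda}(A)\to\tau_t(A)$ by definition of the thermodynamic limit of $\Phi_1+\Phi_2$, so the triangle inequality gives $\tau_t^{\Lambda,I}\circ\tau_t^0(A)\to\tau_t(A)$. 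Together with \eqref{eq:intdecomp}, density of $\A_\Ir^{\mathrm{loc}}$ in $\A_\Ir$, and uniform boundedness of the automorphisms, this establishes $\lim_{\Lambda_2\uparrow\Ir}\lim_{\Lambda_1\uparrow\Ir}\tau_t^{H^1_{\Lambda_1}+H^2_{\Lambda_2}}=\tau_t$ along any exhausting sequences.

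The main obstacle is the interchange of the two limits: one must check that each Duhamel/Lieb--Robinson error bound is uniform in the volume index that is \emph{not} being sent to infinity, so that the iterated limit necessarily coincides with the diagonal thermodynamic limit of $\Phi_1+\Phi_2$, which is already known to exist. This is precisely where it is essential that the Lieb--Robinson constants depend only on $\|\Phi_1\|_F+\|\Phi_2\|_F$, and that one works throughout with a single common $F$-function when summing the two interactions. A secondary, purely technical point, worth isolating as a short lemma, is step (c): the continuous dependence, through the Dyson series, of a dynamics with bounded time-dependent generator on that generator in the supremum norm over compact time intervals.
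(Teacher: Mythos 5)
The paper states Proposition~\ref{prop:intpic} without proof and proceeds directly to use it, evidently regarding it as a routine consequence of the standard theory of thermodynamic limits and Lieb--Robinson bounds for $F$-normed interactions. Your argument supplies exactly that proof and is correct: the finite-volume interaction-picture factorization combined with strong convergence of $\tau_t^{H^1_{\Lambda_1}}$ and norm convergence of the interaction-picture generator $\tau_t^{H^1_{\Lambda_1}}(H^2_\Lambda)\to\tau_t^0(H^2_\Lambda)$ (uniform on compact time intervals, since $H^2_\Lambda$ is a finite sum of local terms) gives the inner limit and the factorization \eqref{eq:intdecomp}; the Duhamel and Lieb--Robinson estimate, whose constants depend only on the $F$-norms of $\Phi_1$ and $\Phi_2$ and hence are uniform in $\Lambda_1\supseteq\Lambda$, then lets the outer limit be identified with the already-existing diagonal thermodynamic limit of $\Phi_1+\Phi_2$. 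The two points you single out as needing care---uniformity of the Duhamel error in the volume index being held fixed, and the Dyson-series continuity of the dynamics with respect to a bounded time-dependent generator---are indeed the only places where real work is required, and you handle both correctly.
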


Armed with Proposition \ref{prop:intpic}, the proof of Theorem \ref{thm:thermolimit} is nearly identical to the proof of Theorem \ref{thm:mainthm}. Using the decomposition \eqref{eq:intdecomp}, one can show that the bound \eqref{eq:conditionalLRbound} in Lemma \ref{lemma:intLRbound} holds with $\tau_t^{H_n^I}$ replaced by $\tau_t^{\Lambda,I}$, uniformly for intervals $\Lambda\supseteq[0,n]$. One can then obtain the bound \eqref{eq:highprobLRbound} with $\tau_t^{H_n}$ replaced by $\tau_t^{I;\Lambda}\circ \tau_t^0$. Taking the limit $\Lambda\uparrow\Ir$ gives this bound for the thermodynamic limit, and the proof proceeds exactly as before.

\section{Applications}\label{sec:applications}

As mentioned before, MBL in the sense of dynamical localization without an energy restriction, has been rigorously established only for the random $XY$ chain and partial results exists for the quantum Ising chain. Naturally, applications of the results in this paper, at the moment, are also restricted to these two models. An extension we will not discuss in detail here is to fermion chains. Our arguments go through without change as long the same obvious analogous conditions are satisfied.
Generalizing in another direction, one could  consider non-random quasi-periodic chains with localization properties such as the Fibonacci chain \cite{mace:2019} or the fermion models studied by Mastropietro \cite{mastropietro:2017,mastropietro:2017a}.

\subsection{The Disordered XY Chain}\label{subsec_xychain}

Consider three real-valued sequences $\mu_j,\gamma_j$ and $\omega_j$. These sequences may be random. The finite volume anisotropic XY Hamiltonian in an external field in the $z$-direction is given by the Hamiltonian
\begin{equation}
H_n^{XY}=\sum_{j=0}^{n-1}\mu_j[(1+\gamma_j)\sigma_j^x\sigma_{j+1}^x+(1-\gamma_j)\sigma_j^y\sigma_{j+1}^y]+\lambda\sum_{j=0}^{n}\omega_j \sigma_j^z,
\end{equation}
acting on $\bigotimes_{x=0}^n\Cx^2$. Here $\sigma_j^x,\sigma_j^y,\sigma_j^z\in\mathcal{A}_{j}$ denote the Pauli spin matrices acting on the $j$th spin. It is well known that the many-body XY Hamiltonian can be written in terms of an effective one-body Hamiltonian via the Jordan-Wigner transformation \cite{lieb:1961}:
\be
H_n^{XY}=\mathcal{C}^*M_n\mathcal{C},
\ee
where $\mathcal{C}^t=(c_0,...,c_n,c_0^*,...,c_n^*)$ is a column vector of operators $c_j$ given by
$$
c_j=\frac{1}{2}(\sigma_j^x-i\sigma_j^y)\prod_{k=0}^{j-1}\sigma_k^{z}, 
$$
and $M_n$ is a 2$\times$2 block matrix,
$$
M_n=\left(\begin{array}{cc}A_n & B_n \\-B_n & -A_n\end{array}\right)
$$
with
$$
A_n=\left(\begin{array}{ccccc}\omega_0 & -\mu_0 & 0 & 0 & 0 \\-\mu_0 & \ddots & \ddots & 0 & 0 \\0 & \ddots & \ddots & \ddots & 0 \\0 & 0 & \ddots & \ddots & -\mu_n \\0 & 0 & 0 & -\mu_n & \omega_n\end{array}\right),
$$
and
$$
B_n=\left(\begin{array}{ccccc}0 & -\mu_0\gamma_0 & 0 & 0 & 0 \\\mu_0\gamma_0 & \ddots & \ddots & 0 & 0 \\0 & \ddots & \ddots & \ddots & 0 \\0 & 0 & \ddots & \ddots & -\mu_n\gamma_n \\0 & 0 & 0 & \mu_n\gamma_n & 0\end{array}\right).
$$
The following result was proved in \cite{hamza:2012}:
\begin{theorem}\label{hamzathm}
Suppose that the matrices $M_n$ are exponentially dynamically localized in the following sense: there exist positive constants $C$ and $\eta$ such that for any integers $n\geq 0$ and $j,k\in[0,n+1]$,
\be\label{xyeff}
\E [\sup_{t\in\Rl}  | (e^{-it M_n})_{j,k}|+|(e^{-it M_n})_{j,n+k+1}|]\leq Ce^{-\eta |j-k|}.
\ee
Then the Heisenberg dynamics $\tau_t^{H_n^{XY}}$ of the $XY$-chain is exponentially dynamically localized, uniformly in time, with $\chi(x)=4^x$.
\end{theorem}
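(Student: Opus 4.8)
The plan is to follow \cite{hamza:2012}: push the one-body estimate \eqref{xyeff} through the Jordan--Wigner transformation to the many-body commutator bound of Definition \ref{def:dynlocaliz}.

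First I would record the one-body structure of the dynamics. Setting $a_{2j}=c_j+c_j^*$ and $a_{2j+1}=-i(c_j-c_j^*)$ for $0\le j\le n$ gives self-adjoint, unit-norm operators with $\{a_k,a_\ell\}=2\delta_{k\ell}$ that generate $\mathcal{A}_{[0,n]}$ under Jordan--Wigner. Since $H_n^{XY}=\mathcal{C}^*M_n\mathcal{C}$ is quadratic, $\tau_t^{H_n^{XY}}$ acts linearly on the Majoranas, $\tau_t^{H_n^{XY}}(a_k)=\sum_\ell V_{k\ell}(t)\,a_\ell$, where $V(t)$ is a (random) real orthogonal $(2n+2)\times(2n+2)$ matrix assembled from the blocks of $e^{\pm 2itM_n}$; because $\sup_t$ is unaffected by the factor $2$, \eqref{xyeff} translates, after relabeling site indices as Majorana indices, into $\E[\sup_t|V_{k\ell}(t)|]\le C'e^{-\eta'|k-\ell|}$ for all $k,\ell$, with $\eta'$ of the same order as $\eta$.

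Second I would reduce the commutator bound to a statement about a single monomial. For finite $X$ the algebra $\mathcal{A}_X$ has a basis of $4^{|X|}$ Pauli monomials $P=\prod_{j\in X}\sigma_j^{w_j}$, and expanding $A\in\mathcal{A}_X^1$ in this orthonormal (with respect to the normalized trace) basis produces coefficients of modulus at most $1$; hence $\sup_{A\in\mathcal{A}_X^1,\,B\in\mathcal{A}_Y^1}\|[\tau_t(A),B]\|\le 4^{|X|}\max_P\sup_{B\in\mathcal{A}_Y^1}\|[\tau_t(P),B]\|$, which is precisely the source of $\chi(x)=4^x$. Since $Y$ is disjoint from $[\min X,\max X]$, the relative commutant of $\mathcal{A}_Y$ in $\mathcal{A}_{[0,n]}$ is $\mathcal{A}_{[0,n]\setminus Y}$, so $\sup_{B\in\mathcal{A}_Y^1}\|[\tau_t(P),B]\|\le 2\,\|\tau_t(P)-\Pi_{[0,n]\setminus Y}\tau_t(P)\|$, where $\Pi_{[0,n]\setminus Y}$ is the normalized partial trace over $Y$; note that no factor depending on $|Y|$ enters. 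Writing $P$ (up to a phase) as a product $\prod_{k\in S}a_k$ of Majoranas with $S\subset\{0,1,\dots,2\max X+1\}$, we have $\tau_t(P)=\pm\prod_{k\in S}\big(\sum_\ell V_{k\ell}(t)\,a_\ell\big)$; truncating each factor to Majoranas supported away from $Y$ — possible because every site entering $P$ is at distance at least $d(X,Y)$ from $Y$ — and combining a Leibniz telescoping (using $\|\tau_t(a_k)\|=1$, which avoids any exponential-in-$|S|$ blow-up) with the one-body decay bounds $\E[\sup_t\|\tau_t(P)-\Pi_{[0,n]\setminus Y}\tau_t(P)\|]$ by a convergent sum of exponential tails, of order $e^{-\eta'' d(X,Y)}$ (the growing number of terms does not spoil this because the relevant distances increase). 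Together with the trivial bound $\|[\tau_t(A),B]\|\le 2$ for small $d(X,Y)$, and the absorption of constants into a slightly smaller rate, this yields dynamical localization with decay function $F(x)=e^{-\eta_0 x}$, $\beta=0$, and $\chi(x)=4^x$.

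The delicate point — and the reason I would lean on \cite{hamza:2012} rather than reprove everything — is the Jordan--Wigner string: a monomial $P$ containing an odd number of $\sigma^x$ or $\sigma^y$ factors carries a $\sigma^z$-string running to the left end of the chain, so as a fermion operator $P=\prod_{k\in S}a_k$ involves Majoranas far outside $X$ (equivalently, $P$ has a parity factor $(-1)^{N_{<\min X}}$). One must verify that this string together with its time evolution either sits harmlessly on the $X$-side of $Y$ or is again controlled by the tails of $V(t)$, and that bookkeeping all of these factors never costs more than the $4^{|X|}$ already spent, the complementary regime (where $|X|$ is large compared with $d(X,Y)$) being covered by the crude bound. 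This uses that the Lieb--Robinson-type estimates coming from the one-body kernel are insensitive to on-site terms and to the local Hilbert space dimensions; the complete argument is carried out in \cite{hamza:2012}.
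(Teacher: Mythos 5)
The paper does not actually prove Theorem~\ref{hamzathm}; it introduces it with the sentence ``The following result was proved in \cite{hamza:2012}'' and gives no argument of its own, so there is no in-paper proof to compare your attempt against. Your sketch is a faithful outline of the Hamza--Sims--Stolz strategy: Jordan--Wigner to Majoranas, linear action $\tau_t(a_k)=\sum_\ell V_{k\ell}(t)a_\ell$ with $V$ built from the blocks of $e^{itM_n}$, the reduction $\sup_{B\in\mathcal{A}_Y^1}\|[C,B]\|\leq 2\|C-\Pi_{Y^c}C\|$, the Pauli-monomial expansion of $A$ giving $\chi(x)=4^x$, and a Leibniz telescoping using $\|\tau_t(a_k)\|=1$ and the orthogonality of $V(t)$.

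One sentence in your second paragraph is, as written, false and should not be stated as a positive assertion: ``truncating each factor to Majoranas supported away from $Y$ --- possible because every site entering $P$ is at distance at least $d(X,Y)$ from $Y$.'' When $Y\subset[0,\min X)$ and $P$ is parity-odd, the Jordan--Wigner string $\prod_{k<\min X}\sigma_k^z$ puts Majoranas of $P$ inside $Y$, so the premise fails. You correctly identify this in your final paragraph as the delicate point and defer it to \cite{hamza:2012}, which is appropriate given that the paper under review does the same; but the earlier sentence should be qualified (``for $Y$ to the right of $X$'') so that the outline is internally consistent. In \cite{hamza:2012} the left-$Y$ case is handled by exploiting conservation of total fermion parity to rewrite the left string as a global parity times a right string, and by organizing the expansion so that the bookkeeping of these parity factors never exceeds the $4^{|X|}$ already spent; it is worth recording that idea explicitly rather than leaving it implicit. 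With that caveat your proposal is consistent with the cited proof.
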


Theorem \ref{hamzathm} shows that if the sequences $\mu_j,\gamma_j$ and $\omega_j$ are such that dynamical localization for the $M_n$ holds, then Theorem \ref{thm:mainthm} applies to the XY chain. If, in addition $\sup_j \mu_j$ and $\sup_j\gamma_j$ are almost surely finite, then the XY chain satisfies the hypotheses of Theorem \ref{thm:thermolimit}.

There are several instances in which the matrices $M_n$ are known to satisfy \eq{xyeff}. For example, if $\gamma_j=0$ and $\mu_j=1$ for all $j$, and the $\omega_j$ are i.i.d. with compactly supported density, then $B_n=0$ and $A_n$ is the finite volume Anderson model. In this case it is well known that \eq{xyeff} holds \cite{kunz:1980}. In \cite{elgart:2014} a large class of random block operators were shown to exhibit exponential dynamical localization at high disorder. Under the assumption that $\mu_j$ and $\gamma_j$ are deterministic and bounded, and that the $\omega_j$ are i.i.d. with sufficiently smooth distribution, this class of random block operators includes $M_n$ and \eq{xyeff} holds for sufficiently large $|\lambda|$. Therefore in these models the conditions of theorems \ref{thm:thermolioms}, \ref{thm:mainthm} and \ref{thm:thermolimit} are satisfied.

The anisotropic case was also investigated in \cite{chapman:2015}. The methods there prove localization of the $M_n$ for $\omega_j$ with compactly supported distribution contained in $(-\infty,-2)$ or $(2,\infty)$. For these results smoothness of the distribution is not needed, however the method produces a bound with a stretched exponential, not an exponential as in \eq{xyeff}. This localization bound is shown to imply a uniform in time localization bound for the XY chain where the decay is given by a stretched exponential. Therefore disordered anisotropic XY models have LIOMs, as shown by Theorem \ref{thm:thermolioms}, but our results do not imply robustness of long transmission times under perturbation.

\subsection{The Quantum Ising Chain}

Another model that has been widely discussed in the literature is the quantum Ising with random coefficients. For concreteness, consider the following family of Hamiltonians for a spin-$1/2$ systems on a chain $[a,b]\subset \Ir$:
\be
H_{[a,b]} = \sum_{x=a}^{b-1} J_x \sigma_x^3\sigma_{x+1}^3 + \sum_{x=a}^b \gamma\Gamma_x \sigma_x^1 + h_x\sigma_x^3,
\label{QIsingHamiltonian}\ee
where $(J_x), (\Gamma_x),$ and $(h_x)$ are three independent sequences of i.i.d. random variables, each with bounded density of compact support.

Mathematical work by John Imbrie and a variety of numerical results point towards the existence of a description of this model in terms of LIOMs of the first kind (Definition \ref{def:LIOM}). To state the various claims we need to introduce the assumptions made by Imbrie \cite{imbrie:2016}. Let $\lambda_\alpha^{[a,b]}$ denote an enumeration of the eigenvalues, which are almost surely simple.

\noindent
{\em Imbrie's Assumption:} There exist $\gamma_0$, such that for all $\gamma \in (-\gamma_0,\gamma_0)$, there exists
constants $\nu , C>0$, such that for all $\delta>0,a <b \in \Ir$ we have
\be
\Pr (\min_{\alpha\neq \beta} |\lambda_\alpha^{[a,b]} - \lambda_\beta^{[a,b]}|<\delta)\leq \delta^\nu C^{b-a+1}.
\label{ImbriesAssumption}\ee

In \cite{imbrie:2016} Imbrie uses a systematic perturbation theory which, under his assumptions, he argues combines with a multi-scale 
analysis to prove detailed properties about the eigenvectors of the Hamiltonians $H_{[a,b]}$ for sufficiently small $\gamma$, uniformly in the length of the chain. We should note, however, that among experts in the multiscale analysis approach to proving localization there is no agreement that such an argument can indeed be carried out along the lines described in \cite{imbrie:2016}.

In the review paper \cite[Section 4.3]{imbrie:2017} the following implications of the perturbation analysis of \cite{imbrie:2016} are stated: $H_{[a,b]}$ is 
diagonalized by a quasi-local unitary transformation and the resulting energy eigenvalues when labeled by Ising configurations take 
the form of a random Ising model with multi-spin interactions of strong decay, i.e., something very similar to the LIOM picture we define in Definition \ref{def:LIOM}. The LIOM representation is explained by starting from Imbrie's localization property for the eigenvectors $\psi^{[a,b]}$ which reads as follows: there exists $\kappa>0$ such that for all sufficiently long finite intervals $[a,b]$ containing the origin one has
$$
\left| 1- \E \left[ \sum_\alpha \rho_\alpha |\langle\psi_\alpha, \sigma^3_0\psi_\alpha\rangle|\right] \right|\leq \gamma^\kappa,
$$
where $\rho_\alpha$ is a probability distribution such as
$$
\rho_\alpha = \frac{e^{-\beta \lambda^{[a,b]}_\alpha}}{\sum_\gamma  e^{-\beta \lambda^{[a,b]}_\gamma}}.
$$

In the spirit of these results it appears that the disordered quantum Ising chain may indeed be a model where the exponential dynamical localization of Definition \ref{def:dynlocaliz} and the LIOM picture of Definition \ref{def:LIOM} indeed both hold.

\appendix

\section{Lieb-Robinson Bounds}
In this appendix we develop a bound on the velocity of propagation under the Heisenberg dynamics which ignores interaction terms supported in a given subset of the lattice. We use the results of \cite{nachtergaele:2019}, in which Lieb-Robinson bounds which do not depend on on-site interactions are developed for Hamiltonians expressed in terms of time-dependent interactions. 

Let $(\Gamma,d)$ denote a countable metric space, and let $\mathcal{P}_0(\Gamma)$ denote the collection of finite subsets of $\Gamma$. Assign a spin Hilbert space $\mathcal{H}_x$ to each $x\in \Gamma$. The algebra of local observables is given by $\mathcal{A}^\text{loc}=\cup_{X\in \mathcal{P}_0(\Gamma)} \mathcal{A}_X$, where $\mathcal{A}_X=\bigotimes_{x\in X}\mathcal{B}(\mathcal{H})$. A time-dependent interaction $\Phi:\Rl\times \mathcal{P}_0(\Gamma)$ is called continuous if $t\mapsto \Phi(t,X)$ is norm continuous for every $X\in\mathcal{P}_0(\Gamma)$.

To measure the spatial decay of the interaction we introduce the notion of an $F$-function. Let $(\Gamma,d)$ denote a countable metric space. Then an $F$-function on $(\Gamma,d)$ is a function $F:[0,\infty)\to(0,\infty)$ such that
\begin{enumerate}
\item $F$ is non-increasing.
\item $F$ is integrable, i.e.,
\begin{equation}
\|F\|=\sup_{x\in\Gamma}\sum_{y\in\Gamma} F(d(x,y))<\infty.
\end{equation}
\item $F$ satisfies the convolution identity,
\begin{equation}
C_F=\sup_{x,y\in\Gamma}\frac{1}{F(d(x,y))}\sum_{z\in\Gamma}F(d(x,z))F(d(z,y))<\infty.
\end{equation}
\end{enumerate}
If $\mu>0$, it is easy to show that $F_\mu(x)=e^{-\mu x}F(x)$ also defines an $F$ function on $(\Gamma,d)$ with $\|F_\mu\|\leq \|F\|$ and $C_{F_\mu}\leq C_F$.\newline

Given an $F$-function $F$, we denote by $\mathcal{B}_F$ the set of continuous interactions $\Phi:\Rl\times\mathcal{P}_0(\Gamma)\to \mathcal{A}^\text{loc}$ such that the function on $\Rl$
\be
t\mapsto \sup_{x,y\in \Gamma} \frac{1}{F(d(x,y))} \sum_{\substack{x,y\in X\\ |X|>1}}\|\Phi(t,X)\|
\ee
is locally bounded.

\begin{theorem}[Theorem 3.1 in \cite{nachtergaele:2019}]\label{thm:monsterLRbound}
Let $\Phi\in\mathcal{B}_{F_\mu}$ for some $F$-function $F$ and $\mu>0$, and let $X,Y\in\mathcal{P}_0(\Gamma)$ with $X\cap Y=\varnothing$. Then for any $\Lambda\in\mathcal{P}_0(\Gamma)$ with $X\cup Y\subseteq \Lambda$, we have
\be
\sup_{\substack{A\in\mathcal{A}_X^1\\ B\in\mathcal{A}_Y^1}}\| [\tau_t^{H_\Lambda}(A),B]\| \leq \frac{2 \|F\|} {C_{F_\mu}}\min\{|X|,|Y|\} (e^{2 C_{F_\mu} I(t)}-1)e^{-\mu d(X,Y)}
\ee 
for every $t\in \Rl$, where
\be
I(t)=\int_{\min\{0,t\}}^{\max\{0,t\}} \sup_{x,y\in \Gamma} \frac{e^{\mu d(x,y)}}{F(d(x,y))} \sum_{\substack{x,y\in X\\ |X|>1}}\|\Phi(s,X)\|ds.
\ee
\end{theorem}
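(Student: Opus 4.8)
Since this statement is Theorem~3.1 of \cite{nachtergaele:2019}, the plan is to recall its proof, organising it so as to make transparent the two features we use afterwards: the bound is insensitive to the on-site terms of $\Phi$ and to the dimensions $\dim\cH_x$. The first step I would take is a reduction to interactions with no on-site terms. Write $H_\Lambda(t)=H_\Lambda^{\mathrm{os}}(t)+H_\Lambda^{\mathrm{int}}(t)$ with $H_\Lambda^{\mathrm{os}}(t)=\sum_{\{x\}\subseteq\Lambda}\Phi(t,\{x\})$, and pass to the interaction picture with respect to $H_\Lambda^{\mathrm{os}}$, so that $\tau_t^{H_\Lambda}=\gamma_t\circ\sigma_t$, where $\sigma_t=\bigotimes_{x\in\Lambda}\sigma_t^{(x)}$ is a tensor product of single-site automorphisms (hence norm- and support-preserving: $\sigma_t(\A_Z)=\A_Z$ for every $Z$) and $\gamma_t$ is generated by the time-dependent interaction $\Psi(t,Z)=\sigma_t(\Phi(t,Z))$, $|Z|>1$. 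Then $\Psi\in\mathcal{B}_{F_\mu}$ with the same $F$-norm data as $\Phi$, and $\sup_{A\in\A_X^1}\|[\tau_t^{H_\Lambda}(A),B]\|=\sup_{A\in\A_X^1}\|[\gamma_t(A),B]\|$, so from here on one may assume $\Phi(t,\{x\})=0$ for all $x$ and $t$.

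Next I would set up the usual commutator differential inequality. Fix $B\in\A_Y^1$ and $A\in\A_X^1$ and put $f(t)=[\tau_t^{H_\Lambda}(A),B]$, so $f(0)=[A,B]=0$. The Heisenberg equation and the Jacobi identity give $f'(t)=i[H_\Lambda(t),f(t)]+i[[H_\Lambda(t),B],\tau_t^{H_\Lambda}(A)]$; since the flow of the first term is conjugation by unitaries, Duhamel's formula yields $\|f(t)\|\le\int_0^{|t|}\|[[H_\Lambda(s),B],\tau_s^{H_\Lambda}(A)]\|\,ds$. Only sets $Z$ meeting $Y$ enter $[H_\Lambda(s),B]$, and $[\Phi(s,Z),B]$ is supported in $Z\cup Y$ with norm $\le 2\|\Phi(s,Z)\|$. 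Introducing $\mathcal{C}_W(t)=\sup_{A\in\A_X^1,\,O\in\A_W^1}\|[\tau_t^{H_\Lambda}(A),O]\|$, which is non-decreasing in $W$ with $\mathcal{C}_W(0)\le 2\,\mathbbm{1}[X\cap W\neq\varnothing]$, I obtain the recursion $\mathcal{C}_Y(t)\le 2\int_0^{|t|}\sum_{Z\cap Y\neq\varnothing,\,|Z|>1}\|\Phi(s,Z)\|\,\mathcal{C}_{Z\cup Y}(s)\,ds$.

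The last step is to iterate this recursion, each branch being terminated by $\mathcal{C}_W(0)\le 2\,\mathbbm{1}[X\cap W\neq\varnothing]$, which produces the Lieb--Robinson series: a sum over $k\ge1$ and over chains $Z_1,\dots,Z_k$ (all $|Z_j|>1$) with $Z_1\cap Y\neq\varnothing$, $Z_{j+1}\cap(Y\cup Z_1\cup\cdots\cup Z_j)\neq\varnothing$ and $Z_k\cap X\neq\varnothing$, weighted by $2^{k+1}\prod_j\|\Phi(s_j,Z_j)\|$ integrated over $0<s_k<\cdots<s_1<|t|$. For each chain I would choose consecutive witnesses $y\in Z_1\cap Y$, $x_j\in Z_j\cap Z_{j+1}$ and $x\in Z_k\cap X$, and bound $\|\Phi(s_j,Z_j)\|\le F_\mu(d(u,v))\,\|\Phi(s_j,\cdot)\|_{F_\mu}$ for the appropriate pair $(u,v)$, where $\|\Phi(s,\cdot)\|_{F_\mu}=\sup_{u',v'}\frac{1}{F_\mu(d(u',v'))}\sum_{Z\ni u',v',\,|Z|>1}\|\Phi(s,Z)\|$; summing out the intermediate witnesses with the convolution property contributes $C_{F_\mu}^{\,k-1}$, the remaining sum satisfies $\sum_{x\in X,\,y\in Y}F_\mu(d(x,y))\le\|F\|\min\{|X|,|Y|\}e^{-\mu d(X,Y)}$ (the $\min$ obtained by also running the argument with the roles of $X$ and $Y$ swapped, using $\|[\tau_t(A),B]\|=\|[\tau_{-t}(B),A]\|$), and the time-ordered integral is at most $\frac{1}{k!}\bigl(\int_0^{|t|}\|\Phi(s,\cdot)\|_{F_\mu}\,ds\bigr)^k=\frac{1}{k!}C_{F_\mu}^{-k}(C_{F_\mu}I(t))^k$. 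Summing over $k$ gives $\frac{2\|F\|}{C_{F_\mu}}\min\{|X|,|Y|\}e^{-\mu d(X,Y)}(e^{2C_{F_\mu}I(t)}-1)$, which is the asserted bound. The only genuinely delicate part is the combinatorial bookkeeping of this series — keeping the overlapping-cluster recursion honest, arranging the witnesses so the telescoping is governed exactly by $C_{F_\mu}$, and controlling the iterated time integrals — carried out in detail in \cite{nachtergaele:2019}; the on-site reduction and the fact that only operator norms ever enter are precisely what make the bound independent of the on-site terms and of the local dimensions, as needed in Lemma~\ref{fnormlemma}.
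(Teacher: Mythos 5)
This statement is quoted verbatim from \cite{nachtergaele:2019} (Theorem~3.1 there), and the present paper does not re-prove it, so there is no internal proof to compare against; your attempt is a blind reconstruction of the cited argument. The overall architecture you give is the right one: the reduction of the on-site terms via the interaction picture with a tensor product of single-site automorphisms (which is exactly what makes the final constants independent of the on-site part and of $\dim\cH_x$), the Duhamel/Jacobi differential inequality for $f(t)=[\tau_t(A),B]$, the resulting iterated series, the use of the convolution property of $F_\mu$, the factor $\|F\|\min\{|X|,|Y|\}e^{-\mu d(X,Y)}$ from summing the endpoints, and the $1/k!$ from the time-ordered simplex. These are all correct, and your final bookkeeping ($2^{k+1}$, $C_{F_\mu}^{k-1}$, $I(t)^k/k!$ summed to $\tfrac{2\|F\|}{C_{F_\mu}}\min\{|X|,|Y|\}(e^{2C_{F_\mu}I(t)}-1)e^{-\mu d(X,Y)}$) reproduces the stated constants.

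There is, however, a genuine gap in the step you flag as ``delicate'': your recursion $\mathcal{C}_Y(t)\le 2\int_0^{|t|}\sum_{Z\cap Y\neq\varnothing}\|\Phi(s,Z)\|\,\mathcal{C}_{Z\cup Y}(s)\,ds$ has an \emph{accumulating} support, so after iterating, the chain condition you correctly state is $Z_{j+1}\cap(Y\cup Z_1\cup\cdots\cup Z_j)\neq\varnothing$; this does \emph{not} guarantee $Z_j\cap Z_{j+1}\neq\varnothing$. Your choice of consecutive witnesses ``$x_j\in Z_j\cap Z_{j+1}$'' therefore need not exist, and without them the telescoping $F_\mu(d(y,x_1))F_\mu(d(x_1,x_2))\cdots$ governed by $C_{F_\mu}$ breaks down: the general configuration produced by your recursion is a tree rooted at $Y$, not a chain. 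The proof in \cite{nachtergaele:2019} is organized so that the iterated sum really is over chains with consecutive overlaps $Z_j\cap Z_{j+1}\neq\varnothing$ (this comes from how the comparison dynamics is set up, not from the recursion you wrote down), and that is what makes the convolution bound apply directly. So while your sketch contains all the right ingredients and the correct final constants, the witness/telescoping step as written is not valid for the recursion you actually derived; it would need to be replaced either by the NSY chain recursion or by a more careful tree-level counting argument.
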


We will now apply the previous theorem to obtain a Lieb-Robinson bound which ignores interaction terms in certain parts of the lattice. For simplicity we restrict ourselves to one-dimensional finite volume systems. Neither of these restrictions is essential.

Suppose that we have a quantum spin chain $\mathcal{H}=\bigotimes_{x=0}^n\mathcal{H}_x$ on the interval $\Lambda_n=[0,n]\subset\Ir_+$ together with a time-dependent Hamiltonian $H(t)$ generated by an interaction $\Phi(t):\mathcal{P}(\Lambda_n)\to\mathcal{B}(\mathcal{H})$. Let $\mathcal{I}=\{I_j\}_{j=1}^m$ be a collection of disjoint subintervals $I_j=[a_j,b_j]\subset \Lambda_n$, satisfying $b_{j}<a_{j+1}$. For purposes of notation let $b_0=0$ and $a_{m+1}=n$. We seek to define an equivalent spin chain in which the spins located on the sites $[b_j,a_{j+1}]$ are identified. Define the contracted lattice $\Gamma_{\mathcal{I}}$ by,
\[
\Gamma_{\mathcal{I}}=\cup_{j=1}^m [a_j,b_j)\cup\{n\}
\]
Define a map $\mathcal{C}:\Lambda_n\to \Gamma_{\mathcal{I}}$ by,
\begin{equation}
\mathcal{C}(x)=\begin{cases}
a_{j} & \text{ if }x\in[b_{j-1},a_{j}] \text{ for some }j=1,2,...,m+1\\
x & \text{ Otherwise}
\end{cases}
\end{equation}
Note that $\mathcal{C}$ maps a site in $\Lambda_n$ to its corresponding site in $\Gamma_\mathcal{I}$. For each $x\in\Gamma_{\mathcal{I}}$, define
\begin{equation}
\mathcal{H}_x'=
\bigotimes_{z\in\mathcal{C}^{-1}(\{x\})}\mathcal{H}_z 
\end{equation}
Then $\bigotimes_{x=0}^n\mathcal{H}_x=\bigotimes_{x\in\Gamma_{\mathcal{I}}}\mathcal{H}_x'$, and an observable which has support $X$ in $\mathcal{A}_{\Lambda_n}$ has support $\mathcal{C}(X)$ in $\mathcal{A}_{\Gamma_\mathcal{I}}$. Define an interaction $\tilde\Phi(t)$ on $\Gamma_{\mathcal{I}}$ by,
\begin{equation}
\tilde\Phi(t)(X)=\sum_{\substack{Z\subseteq\Lambda_n\\ \mathcal{C}(Z)=X}}\Phi(t)(Z)
\end{equation}
Then $\tilde{\Phi}$ and $\Phi$ generate the same Hamiltonian. With this setup we have the following proposition.

\begin{theorem}\label{LRbound}
Suppose $d$ is a metric on $\Gamma_{\mathcal{I}}$. Let $\mu>0$ and let $F$ denote any $F$-function on $(\Gamma_{\mathcal{I}},d)$. Then for any $X,Y\subseteq\Lambda_n$ with $\mathcal{C}(X)\cap\mathcal{C}(Y)=\varnothing$ we have,
\begin{equation}
\sup_{\substack{A\in\mathcal{A}_{X}^1\\ B\in\mathcal{A}_Y^1}}\|[\tau_t^{H}(A),B]\|\leq \frac{2\|F\|}{C_{F_\mu}}\min\{|\mathcal{C}(X)|,|\mathcal{C}(Y)|\}(e^{2C_{F_\mu} I(t)}-1)e^{-\mu d(\mathcal{C}(X),\mathcal{C}(Y))}
\end{equation}
holds for all $t\in\Rl$, where
\begin{equation}\label{contractedfnorm}
I(t)=\int_{\min\{0,t\}}^{\max\{0,t\}}\sup_{x,y\in\Gamma_\mathcal{I}}\frac{e^{\mu d(x,y)}}{F(d(x,y))}\sum_{\substack{X\subseteq\Gamma_{\mathcal{I}}:\\x,y\in X, \\ |X|>1}}\|\tilde\Phi(s)(X)\|ds.
\end{equation}
\end{theorem}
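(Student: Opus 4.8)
The plan is to derive Theorem~\ref{LRbound} as a direct corollary of Theorem~\ref{thm:monsterLRbound}, applied on the contracted lattice $(\Gamma_{\mathcal{I}},d)$ to the interaction $\tilde\Phi$. Two features of Theorem~\ref{thm:monsterLRbound} make the reduction possible: the right-hand side and the time factor $I(t)$ only involve interaction terms $X$ with $|X|>1$, so the bound is insensitive to on-site contributions — and the terms $\Phi(t)(Z)$ with $\mathcal{C}(Z)$ a single point are exactly those absorbed into the on-site part of $\tilde\Phi$; and the bound makes no reference to the dimensions of the single-site Hilbert spaces, so it applies unchanged after the (possibly dimension-increasing) identification of sites.

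First I would record the algebraic identifications. Since $\mathcal{H}_x'=\bigotimes_{z\in\mathcal{C}^{-1}(\{x\})}\mathcal{H}_z$ and $\bigotimes_{x=0}^n\mathcal{H}_x=\bigotimes_{x\in\Gamma_{\mathcal{I}}}\mathcal{H}_x'$, for any $X\subseteq\Lambda_n$ one has $\mathcal{A}_X\subseteq\mathcal{A}_{\mathcal{C}(X)}$, with the latter algebra computed on $\Gamma_{\mathcal{I}}$ and the operator norm unchanged; in particular $\mathcal{A}_X^1$ embeds into the unit ball of $\mathcal{A}_{\mathcal{C}(X)}$. Moreover $\tilde\Phi$ and $\Phi$ generate the same local Hamiltonian $H(t)$, hence the same Heisenberg dynamics $\tau_t^H$. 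Next I would check that $\tilde\Phi\in\mathcal{B}_{F_\mu}$: since $\Gamma_{\mathcal{I}}$ is finite and each $t\mapsto\Phi(t)(Z)$ is norm-continuous, $t\mapsto\tilde\Phi(t)(X)$ is norm-continuous, and the map $t\mapsto\sup_{x,y\in\Gamma_{\mathcal{I}}}F_\mu(d(x,y))^{-1}\sum_{x,y\in X,\,|X|>1}\|\tilde\Phi(t)(X)\|$ is a finite supremum of locally bounded functions, hence locally bounded.

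With these preliminaries, I would apply Theorem~\ref{thm:monsterLRbound} with $\Lambda=\Gamma_{\mathcal{I}}$ and with the disjoint sets $\mathcal{C}(X)$ and $\mathcal{C}(Y)$ (disjoint by hypothesis). This produces the asserted estimate with $\min\{|\mathcal{C}(X)|,|\mathcal{C}(Y)|\}$, the exponential factor $e^{-\mu d(\mathcal{C}(X),\mathcal{C}(Y))}$, and the time factor $I(t)$ exactly as in \eqref{contractedfnorm} — but with the supremum over $A$ in the unit ball of $\mathcal{A}_{\mathcal{C}(X)}$ and $B$ in the unit ball of $\mathcal{A}_{\mathcal{C}(Y)}$; restricting these suprema to the smaller sets $\mathcal{A}_X^1$ and $\mathcal{A}_Y^1$ via the embedding above yields the statement. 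There is no serious obstacle here: the whole argument is the bookkeeping of the site identification together with the observation that the interaction terms killed by the "$|X|>1$" restriction are harmless. The only point needing mild care is that collapsing the gaps between the $I_j$ enlarges the single-site Hilbert spaces without disturbing either the support of observables or the value of $I(t)$ — which is guaranteed precisely because the bounds of \cite{nachtergaele:2019} are dimension-free and insensitive to on-site terms — and that $d$ may be taken to be an arbitrary metric on $\Gamma_{\mathcal{I}}$, which is fine since Theorem~\ref{thm:monsterLRbound} is stated for an abstract countable metric space.
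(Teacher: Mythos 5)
Your proposal is correct and follows exactly the route taken in the paper: the paper's own proof of Theorem \ref{LRbound} consists of the single line ``Apply Theorem \ref{thm:monsterLRbound} to the spin model $\tilde\Phi$,'' and you have simply spelled out the bookkeeping (the algebra embedding $\mathcal{A}_X\subseteq\mathcal{A}_{\mathcal{C}(X)}$, the equality of the generated Hamiltonians, the membership $\tilde\Phi\in\mathcal{B}_{F_\mu}$, and the restriction of the supremum) that the authors left implicit.
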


\begin{proof}
Apply Theorem \ref{thm:monsterLRbound} to the spin model $\tilde{\Phi}$. 
\end{proof}

A few remarks about this theorem need to be made. Note that
\begin{equation}
\sum_{\substack{X\subseteq\Gamma_\mathcal{I}:\\x,y\in X,\\ |X|>1}}\|\tilde{\Phi}(t)(X)\| =\sum_{\substack{X\subseteq\Gamma_\mathcal{I}:\\x,y\in X,\\ |X|>1}} \| \sum_{\substack{Z\subseteq\Lambda_n\\ \mathcal{C}(Z)=X}}\Phi(t)(Z) \|\leq \sum_{\substack{Z\subseteq\Lambda_n:\\ x,y\in Z,\\ |\mathcal{C}(Z)|>1}}\|\Phi(t)(Z)\|
\end{equation}
for any pair $x,y\in\Gamma_\mathcal{I}$. If $Z\subset[b_{j-1},a_j]$ for some $j$, then $\mathcal{C}(Z)$ will contain at most one point of $\Gamma_\mathcal{I}$. Therefore Theorem \ref{LRbound} provides an upper bound on the speed of propagation which excludes elements from the original interaction with support $Z$.

While Theorem \ref{LRbound} was stated for an arbitrary metric $d$ on $\Gamma_\mathcal{I}$, there are two natural metrics which both allow $(\Gamma_\mathcal{I},d)$ to be isometrically embedded into $\Ir_+$. One choice to simply restrict the usual metric on $\Ir_+$ to $\Gamma_\mathcal{I}$. Another choice is to define $d$ so that $(\Gamma_{\mathcal{I}},d)$ isometrically embeds into $[0,L]$, where $L=\sum_{j=1}^m(b_j-a_j)$. With either of these metrics, given an $F$-function $F$ on $\Ir_+$ with the usual metric, the constants in Theorem \ref{LRbound} can be chosen to be $c_0=2\|F\|/C_{F_\mu}$ and $c_1=2 C_{F_\mu}$. In particular, these constants do not depend on $n$ or the collection of intervals $\mathcal{I}$. This follows from the fact that $\Gamma_\mathcal{I}$ isometrically embeds into $(\Ir_+,|\cdot|)$ when equipped with either of these metrics.

\section*{Acknowledgments}

This article reports on work supported by the National Science Foundation under Grants DMS-1207995, DMS-1515850 and DMS-1813149. We also acknowledge support from the Centre de Recherches Math\'ematiques (Montr\'eal) and the Simons Foundation during Fall 2018, when part of this work was carried out. Our work was stimulated by fruitful discussions with Gunter Stolz and Simone Warzel. 

\providecommand{\bysame}{\leavevmode\hbox to3em{\hrulefill}\thinspace}
\providecommand{\MR}{\relax\ifhmode\unskip\space\fi MR }
\providecommand{\MRhref}[2]{%
  \href{http://www.ams.org/mathscinet-getitem?mr=#1}{#2}
}
\providecommand{\href}[2]{#2}


\begin{thebibliography}{10}

\bibitem{abdul-rahman:2017}
H~Abdul-Rahman, B.~Nachtergaele, R.~Sims, and G.~Stolz, \emph{Localization
  properties of the disordered xy spin chain. a review of mathematical results
  with an eye toward many-body localization}, Ann. Phys. (Berlin) \textbf{529}
  (2017), 1600280, arXiv:1610.01939.

\bibitem{aizenman:1993}
M.~Aizenman and S.~Molchanov, \emph{Localization at large disorder and at
  extreme energies: An elementary derivation}, Commun. Math. Phys. \textbf{157}
  (1993), 245--278.

\bibitem{aizenman:2009}
M.~Aizenman and S.~Warzel, \emph{Localization bounds for multiparticle
  systems}, Commun. Math. Phys. \textbf{290} (2009), 903--934.

\bibitem{aizenman:2015}
\bysame, \emph{Random operators. disorder effects on quantum spectra and
  dynamics.}, Graduate Studies in Mathematics, vol. 168, Amer. Math. Soc.,
  2015.

\bibitem{anderson:1958}
P.~W. Anderson, \emph{Absence of diffusion in certain random lattices}, Phys.
  Rev. \textbf{109} (1958), 1492--1505.

\bibitem{basko:2006}
D.~M. Basko, I.~L. Aleiner, , and B.~L. Altshuler, \emph{Metal-insulator
  transition in a weakly interacting many-electron system with localized
  single-particle states}, Annals of Physics \textbf{321} (2006), 1126--1205.

\bibitem{beaud:2017}
V.~Beaud and S.~Warzel, \emph{Low-energy {F}ock-space localization for
  attractive hard-core particles in disorder}, Ann. H. Poincar\'{e} \textbf{18}
  (2017), 3143--3166.

\bibitem{beaud:2018}
\bysame, \emph{Bounds on the entanglement entropy of droplet states in the
  {XXZ} spin chain}, J. Math. Phys. \textbf{59} (2018), 012109,
  arXiv:1709.10428.

\bibitem{braun:2019}
P.~Braun, D.~Waltner, M.~Akila, B.~Gutkin, and T.~Guhr, \emph{Transition from
  quantum chaos to localization in spin chains}, .arXiv:1902.06265, 2019.

\bibitem{chandran:2015}
A.~Chandran, I.H Kim, G.~Vidal, and D.A Abanin, \emph{Constructing local
  integrals of motion in the many-body localized phase}, Phys. Rev. B
  \textbf{91} (2015), 085425.

\bibitem{chapman:2015}
J.~Chapman and G.~Stolz, \emph{Localization for random block operators related
  to the {XY} spin chain}, Ann. H. Poincar\'{e} \textbf{16} (2015), 405--435.

\bibitem{chen:2019}
C.-F. ~Chen, and A.~Lucas,
\emph{Finite Speed of Quantum Scrambling with Long Range Interactions},
Phys. Rev. Lett. \textbf{123} (2019), 250605.

\bibitem{chulaevsky:2009}
V.~Chulaevsky and Y.~Suhov, \emph{Multi-particle {A}nderson localisation:
  induction on the number of particles}, Math. Phys. Anal. Geom. \textbf{12}
  (2009), 117--139.

\bibitem{de-roeck:2017a}
W.~De~Roeck and F.~Huveneers, \emph{Stability and instability towards
  delocalization in many-body localization systems}, Phys. Rev. B \textbf{95}
  (2017), 155129.

\bibitem{de-roeck:2016}
W.~De~Roeck, F.~Huveneers, M.~M\"uller, and M.~Schiulaz, \emph{Absence of
  many-body mobility edges}, Phys. Rev. B \textbf{93} (2016), 014203.

\bibitem{de-roeck:2020}
W.~De~Roeck, F.~Huveneers, and S.~Olla, 
\emph{Subdiffusion in One-Dimensional {H}amiltonian Chains with Sparse Interactions},
J. Stat. Phys. (2020), online, DOI 10.1007/s10955-020-02496-1

\bibitem{de-roeck:2017}
W.~De~Roeck and J.Z. Imbrie, \emph{Many-body localization: stability and
  instability}, Phil. Trans. R. Soc. A \textbf{375} (2017), 20160422.

\bibitem{de-roeck:2015}
W.~De~Roeck and M.~Sch\"utz, \emph{Local perturbations perturb
  exponentially--locally}, J. Math. Phys. \textbf{56} (2015), 061901.

\bibitem{elgart:2018b}
A.~Elgart, A.~Klein, and G.~Stolz, \emph{Droplet localization in the random
  {XXZ} model and its manifestations}, J. Phys. A: Math. Theor. \textbf{51}
  (2018), 01LT02.

\bibitem{elgart:2018a}
\bysame, \emph{Manifestations of dynamical localization in the disordered xxz
  spin chain}, Commun. Math. Phys. \textbf{361} (2018), 1083--1113 |,
  arXiv:1708.00474.

\bibitem{elgart:2018}
\bysame, \emph{Many-body localization in the droplet spectrum of the random
  {XXZ} quantum spin chain}, J. Funct. Anal. \textbf{275} (2018), 211--258,
  arXiv:1703.07483.

\bibitem{elgart:2014}
A.~Elgart, M.~Shamis, and S.~Sodin, \emph{Localisation for non-monotone
  {S}chr\"{o}dinger operators}, J. Eur. Math. Soc. \textbf{16} (2014),
  909--924.

\bibitem{germinet:2001}
F.~Germinet and A.~Klein, \emph{Bootstrap multiscale analysis and localization
  in random media}, Commun. Math. Phys. \textbf{222} (2001), 415--448.

\bibitem{goihl:2019}
M.~Goihl, J.~Eisert, and C.~Krumnow, \emph{Are many-body localized systems
  stable in the presence of a small bath?}, arXiv:1902.0437, 2019.

\bibitem{hamza:2012}
E.~Hamza, R.~Sims, and G.~Stolz, \emph{Dynamical localization in disordered
  quantum spin systems}, Commun. Math. Phys. \textbf{315} (2012), 215--239.

\bibitem{huse:2014}
D.A. Huse, R.~Nandkishore, and V.~Oganesyan, \emph{Phenomenology of fully
  many-body-localized systems}, Phys. Rev. B \textbf{90} (2014), 174202.

\bibitem{imbrie:2016}
J.Z. Imbrie, \emph{On many-body localization for quantum spin chains}, J. Stat.
  Phys. \textbf{163} (2016), 998--1048.

\bibitem{imbrie:2017}
J.Z. Imbrie, V.~Ros, and A.~Scardicchio, \emph{Review: Local integrals of
  motion in many-body localized systems}, Ann. Phys. (Berlin) \textbf{529}
  (2017), 1600278.

\bibitem{kunz:1980}
H.~Kunz and B.~Souillard, \emph{Sur le spectre des op\'erateurs aux
  diff\'erences finies al\'eatoires}, Commun. Math. Phys. \textbf{78} (1980),
  201--224.

\bibitem{lieb:1961}
E.~Lieb, T.~Schultz, and D.~Mattis, \emph{Two soluble models of an
  antiferromagnetic chain}, Ann.~Phys.~(N.Y.) \textbf{16} (1961), 407--466.

\bibitem{luitz:2017}
D.J. Luitz, F.~Huveneers, and W.~De~Roeck, \emph{How a small quantum bath can
  thermalize long localized chains}, Phys. Rev. Lett. \textbf{119} (2017),
  150602.

\bibitem{mace:2019}
N.~Mac\'{e}, N.~Laflorencie, and F.~Alet, \emph{Many-body localization in a
  quasiperiodic fibonacci chain}, SciPost Phys. \textbf{6} (2019), 050.


\bibitem{mastropietro:2017a}
V.~Mastropietro, \emph{Coupled identical localized fermionic chains with
  quasi-random disorder}, Phys. Rev. B \textbf{95} (2017), 075155.

\bibitem{mastropietro:2017}
\bysame, \emph{Localization in interacting fermionic chains with quasi-random
  disorder}, Commun. Math. Phys. \textbf{351} (2017), 283--309.

\bibitem{nachtergaele:2006}
B.~Nachtergaele, Y.~Ogata, and R.~Sims, \emph{Propagation of correlations in
  quantum lattice systems}, J. Stat. Phys. \textbf{124} (2006), 1--13.

\bibitem{nachtergaele:2019}
B.~Nachtergaele, R.~Sims, and A.~Young, \emph{Quasi-locality bounds for quantum
  lattice systems. {I}. {L}ieb-{R}obinson bounds, quasi-local maps, and
  spectral flow automorphisms}, J. Math. Phys. \textbf{60} (2019), 061101.

\bibitem{oganesyan:2007}
V.~Oganesyan and D.A. Huse, \emph{Localization of interacting fermions at high
  temperature}, Phys. Rev. B \textbf{75} (2007), 155111.

\bibitem{pal:2010}
A.~Pal and D.A Huse, \emph{The many-body localization phase transition}, Phys.
  Rev. B \textbf{82} (2010), 174411.

\bibitem{serbyn:2013a}
M.~Serbyn, Z.~Papi\'{c}, and D.~A. Abanin, \emph{Local conservation laws and
  the structure of the many-body localized states}, Phys. Rev. Lett.
  \textbf{111} (2013), 127201.

\bibitem{serbyn:2013}
M.~Serbyn, Z.~Papic, and D.~A. Abanin, \emph{Universal slow growth of
  entanglement in interacting strongly disordered systems}, Phys. Rev. Lett.
  \textbf{110} (2013), 260601.

\bibitem{sims:2016}
R.~Sims and S.~Warzel, \emph{Decay of determinantal and {P}faffian correlation
  functionals in one-dimensional lattices}, Commun. Math. Phys. \textbf{347}
  (2016), 903--931.

\bibitem{suntajs:2019}
J.~S\v{u}ntajs, J.~Bon\v{c}a, T.~Prosen, and L.~Vidmar, \emph{Quantum chaos
  challenges many-body localization}, arXiv:1905.06345, 2019.

\bibitem{thiery:2018}
T.~Thiery, F.~Huveneers, M.~M\"uller, and W.~De~Roeck, \emph{Many-body
  delocalization as a quantum avalanche}, Phys. Rev. Lett. \textbf{121} (2018),
  140601.

\end{thebibliography}
\end{document}